\newtheorem{theorem}{Theorem}
    \newwrite\bibnotes
    \def\bibnotesext{Notes.bib}
\write\bibnotes{@CONTROL{REVTEX41Control}}
\write\bibnotes{@CONTROL{%
    apsrev41Control,author="08",editor="1",pages="1",title="0",year="1"}}
\write\@auxout{\string\citation{apsrev41Control}}%
\newcommand{\me}{\mathrm{e}}
\begin{document}
\title{Hierarchical deposition and scale-free networks: a visibility algorithm approach}

\author{Jonas Berx}
\affiliation{Institute for  Theoretical Physics, KU Leuven, B-3001 Leuven, Belgium}

\date{\today}

\begin{abstract}
The growth of an interface formed by the hierarchical deposition of particles of unequal size is studied in the framework of a dynamical network generated by a horizontal visibility algorithm. For a deterministic model of the deposition process, the resulting network is scale-free with dominant degree exponent $\gamma_e = \ln{3}/\ln{2}$ and transient exponent $\gamma_o = 1$. An exact calculation of the network diameter and clustering coefficient reveals that the network is scale invariant and inherits the modular hierarchical nature of the deposition process. For the random process, the network remains scale free, where the degree exponent asymptotically converges to $\gamma =3$, independent of the system parameters. This result shows that the model is in the class of fractional Gaussian noise (fGn) through the relation between the degree exponent and the series' Hurst exponent $H$. Finally, we show through the degree-dependent clustering coefficient $C(k)$ that the modularity remains present in the system.
\end{abstract}

\maketitle

\section{Introduction}\label{sec:intro}
The deposition of particles onto a substrate is a subject of considerable practical importance, and it has found applications in various fields \cite{barabasi_stanley_1995}. A plethora of deposition processes exists within the area of statistical physics, which often reduce to variations of either the \emph{random} or the \emph{ballistic} deposition process \cite{Family_1985,Family_1986,Meakin1986}. These surface growth models are understood to be the microscopic discrete versions of the continuous models of respectively Edwards and Wilkinson \cite{Edwards1982,Vvedensky1993,Buceta2014} (if surface relaxation is taken into account), and Kardar, Parisi and Zhang \cite{KPZ,Bertini1997}. Through experimental and theoretical studies performed over the past decades, it is known that such surfaces are rough and generally display fractal behaviour \cite{Gomes2021}. One assumption that is often made in the discrete variations is that the particles are geometrically identical in every aspect, and that the deposition steps occur \emph{sequentially}, i.e. one-by-one. These assumptions are not generally true for real-world applications. Here, we will consider the hierarchical deposition model (HDM) that was introduced in Ref. \cite{INDEKEU1998294}, which assumes that particles are deposited according to a power law as a function of their size in a \emph{synchronous} fashion, where particles of the same size are all deposited simultaneously in ``generations", analogous to a finite density aggregation process where the incoming particle flux can be controlled \cite{Cheng1987,Baiod1988,Krug1991}. Similar studies on systems with power-law particle distributions investigated sequential deposition by including a power-law distributed noise, resulting in rare-event dominated fluctuations \cite{ZHANG19901,Hosseinabadi2019}. Natural processes that show a power-law size distribution of the particles can range from the large-scale, e.g. the deposition of pyroclastic material from volcanic eruptions \cite{Pioli2019}, or the mass distribution of fragments of an asteroid breaking up in the Earth’s atmosphere \cite{Brykina2021}, to small laboratory scales such as the debris of nuclei multifragmentation \cite{Finn1982}. Generally, a panoply of solid materials (rock, sand, concrete, etc.), possess a power-law size distribution resulting from the breaking of larger objects \cite{Turcotte1986}. The HDM has been extended to include different scaling of particle size and deposition probability \cite{POSAZHENNIKOVA2000,Indekeu2001,INDEKEU2000135}, and has been used to model the deposition of particles with spins \cite{magneticDeposition}. Furthermore, it has found applications in the description of bacterial biofilms \cite{INDEKEU200414,INDEKEUSZNAJD} and has been used to describe coastline formation and fractal percolation \cite{BERX2021125998}.

In this work, we map the topology of the surface formed by the HDM into a complex network by means of a \emph{visibility algorithm} (VA). The essence of the VA is to create a network from a set of data by assigning a node to each datum and assigning edges based on the mutual visibility between two data, i.e., if a line of visibility is not ``intersected" by any intermediate data. This algorithm was originally developed \cite{Lacasa4972} to uncover structures in time series data, uch as finding signatures of self-organised criticality (SOC) in avalanche-based data \cite{Kaki2022}, and it has found applications in astrophysics \cite{npg-19-657-2012}, medicine \cite{Ahmadlou20101099}, fluid mechanics \cite{Juniper2018661} and several other fields \cite{Paiva2022}. Generally, the VA comes in two types: the Natural Visibility Algorithm (NVA) and the Horizontal Visibility Algorithm (HVA). The former considers the visibility line as a direct connection between two data and hence it is inclined with respect to the chosen common baseline. The latter is based on whether or not two data can have mutual horizontal visibility (illustrated in the middle panel of Fig. \ref{fig:diameter_HVG}) and has no inclination with respect to the baseline. It is not difficult to see that the graph generated by the HVA is always a subgraph of the NVA, as horizontal visibility implies natural visibility between two data. An extensive comparison between the two choices is performed in Ref. \cite{Lacasa4972}. The visibility algorithm has been extended to include, e.g., directed networks, which can be used to determine time series (ir)reversibility \cite{Lacasa2012}, limited penetrable (i.e., see-through) visibility networks \cite{Wang2018}, and multilayer networks to study multivariate data \cite{Lacasa2015}. A comprehensive review of VAs can be found in Ref. \cite{Nunez12}.

The setup of this work is as follows. In section \ref{sec:HDD}, we introduce a deterministic version of the hierarchical deposition model and calculate various topological quantities: the degree distribution, graph diameter, clustering coefficient and adjacency matrix spectrum. In section \ref{sec:HRD}, we extend the model to incorporate random deposition and erosion with probabilities $P$ and $Q$, respectively. We revisit the topological quantities and calculate them numerically. We analytically argue that the degree exponent is $\gamma = 3$ through a connection with the Hurst exponent $H$ and the Hausdorff dimension $D_f$. Finally, in section \ref{sec:conclusions}, we present conclusions and a future outlook.

\section{Hierarchical deterministic deposition}\label{sec:HDD}
Before studying the random version of the HDM, let us start by introducing the hierarchical deterministic deposition model (HDDM). This model will serve to illustrate the main concepts in an analytical and mechanistic fashion. Consider the deposition of rigid particles on a one-dimensional substrate, where we assume that deposited particles are squares with different sizes. If the deposition occurs in a viscous medium such as water, the larger particles are deposited first, followed by the smaller particles. Additionally, we assume that the number of particles $N(s)$ of size $s$ follows a hyperbolic distribution 
\begin{equation}
    \label{eq:hyperbolic}
    N(s) = \lambda^{-1}N(s/\lambda)\,,
\end{equation}
where $\lambda\in\mathbb{N}$ and $\lambda > 1$. Particles are deposited simultaneously instead of sequentially, in contrast with, e.g., the solid-on-solid or the ballistic deposition models. For simplicity, let us only consider $\lambda = 3$ and divide the unit interval $[0,1]$ into segments of length $1/3$. Deposit a square with side length $1/3$ on the middle segment while leaving the others empty. Repeating this procedure of segmentation and subsequent deposition of particles results in a ``castle'' landscape as shown in Fig. \ref{fig:diameter_HVG}(a), where one step of segmentation and deposition corresponds to a single generation $n$.
\begin{figure*}[htp]
     \centering
     \includegraphics[width=0.95\linewidth]{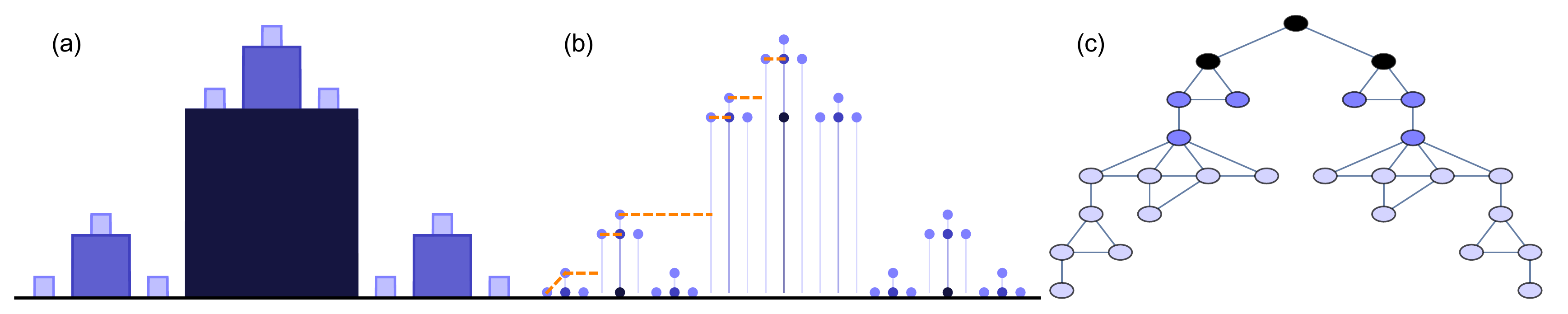}
        \caption{\textbf{(a)} Iterative construction of the fractal landscape for the first three generations of the HDDM. \textbf{(b)} Height profile for the fractal landscape. The shortest path length between the leftmost and central nodes is indicated by dashed orange lines. \textbf{(c)} The complex network based on the HVA as defined by equation \eqref{eq:visibility}. The different stages of each figure are indicated by colors: generation $n=1$ (black), $n=2$ (dark blue), and $n=3$ (light blue)}.
        \label{fig:diameter_HVG}
\end{figure*}

In Ref. \cite{INDEKEU1998294}, it was shown that this construction results in a \textit{logarithmic} fractal, whereby the surface length increment quickly saturates to the constant $2/3$. This logarithmic fractal character remains present for a random model with deposition probability $P$. 

From the above construction, it is clear that not every value for the interface height, measured with respect to the baseline, is accessible. In particular, the set of height values (in ascending order) in a generation $n$ is the subset of the first $2^n$ values of the set $3^{-n} S(0,1)$, i.e., $H^{(n)} = 3^{-n} S^{(n)}(0,1) \subset 3^{-n}S(0,1)$ where $S(0,1) = \{0, 1, 3, 4, 9, 10, 12, 13,\dots\}$ is the \emph{Stanley sequence} (OEIS A005836). The height distribution function, supported on $H^{(n)}$, is then 
\begin{equation}
    \label{eq:height_distribution}
    P^{(n)}(y = h_i\in H^{(n)}) = \left(\frac{2}{3}\right)^n 2^{-s(i)}\,,
\end{equation}
where $i$ is the index of $h$ in the set $H^{(n)}$, starting from $i=0$, and the sequence $s(i)$ indicates the number of $1$'s in a binary expansion of $i$, defined through the recurrence 
\begin{align}
\label{eq:recurrence}
        s(0)&=0\,, && s(2i)=s(i)\,, && s(2i+1)= s(i)+1\,.
\end{align}
By construction, the height at the central point of the interface grows to the maximal value $h_{max} = \sum_{i=1}^n 3^{-i}$, which is bounded by $1/2$ for large $n$.


We now show that the HDDM can be mapped to a scale-free network, the horizontal visibility graph (HVG), by means of the HVA \cite{Lacasa4972}. 

Two data points $x_i$ and $x_j$ possess mutual horizontal visibility if the following criterion holds:
\begin{align}
\label{eq:visibility}
    \min{\{x_i\,,x_j\}}>x_k && \text{for all $k$}\quad\text{such that} \,i<k<j\,.
\end{align}
If two points $x_i,x_j$ have mutual visibility, an edge exists between the two nodes $i,j$ in the network created by the HVA. This algorithm has been used to characterise time series \cite{Luque2009} by converting them to complex networks, enabling one to study the series combinatorically. To the author's knowledge, VAs have not received much attention in the context of deposition processes, with only one work \cite{Bru2014} that deduces growth exponents for the Edwards-Wilkinson (EW), Kardar-Parisi-Zhang (KPZ) and Molecular Beam Epitaxy (MBE) equations, as well as for the discrete Random Deposition (RD), Random Deposition with Surface Relaxation (RDSR) and Eden models. Another related work \cite{KARTHA2017556} studies the random deposition of patchy particles in the context of VAs. We will now study different topological aspects of the network associated with the HDDM.

\subsection{The degree distribution}
When the HVA is applied to the HDDM, the node degree $k_j(n)$ can be found for every individual node $j = 1,...,3^n$ in a generation $n$ by summing all entries in the $j-$th column of the adjacency matrix $A^{(n)}$, i.e., $k_j(n) = \sum_{i=1}^{3^n}A^{(n)}_{ij}$. After some algebra, which is outlined in appendix \ref{app:A}, and by making use of the hierarchical structure of $A^{(n)}$, the $j$th node degree can be found as follows
\begin{widetext}%
\begin{equation}%
\label{eq:degree}
    \begin{split}
        k_j(n) &= \left(1+\delta_{1,\alpha_1}\right)
        + \sum\limits_{k=1}^{n-1} \Bigg\{\delta_{0,\alpha_{k+1}}\cdot\sum\limits_{i=0}^{2^{k-1}-1} \left(\delta_{j-\sigma_{nk},\,\Xi_{ki}+1/2} + \delta_{j-\sigma_{nk},\,\Xi_{ki}+3/2}\right)\\
        &+ \delta_{2,\alpha_{k+1}}\cdot\sum\limits_{i=0}^{2^{k-1}-1} \left(\delta_{3^k +1 -j+\sigma_{nk},\,\Xi_{ki}+1/2} + \delta_{3^k +1 -j+\sigma_{nk},\,\Xi_{ki}+3/2}\right)
        +\delta_{1,\alpha_{k+1}}\cdot 2^k \left(\delta_{j-\sigma_{nk},\,1} + \delta_{j-\sigma_{nk},\,3^k}\right) \Bigg\}
    \end{split}
\end{equation}%
\end{widetext}%
where $\sigma_{nk}$ and $\Xi_{ki}$ are defined as:
\begin{align}
        \sigma_{nk} &= \sum\limits_{l=k}^n\alpha_{l+1} 3^l\,, \qquad
        \Xi_{ki} = \frac{3^k}{2} +\frac{3i}{2} +\sum\limits_{m=1}^i \frac{3^{a(m)}}{2}\,.
\end{align}
The $\alpha_i$'s, where $i\in\{1,...,n\}$ are the digits in the base 3 representation of the node index $j$. For example, for $j=17$ in generation $n=4$, the number $j$ can be written as $17 = 2\cdot 3^0 + 2\cdot 3^1 + 1\cdot 3^2 + 0\cdot 3^3$, hence, $\vec{\alpha} = (\alpha_1, \alpha_2,\alpha_3,\alpha_4) = (2,2,1,0)$.  Furthermore, the function $a(m)$ is the so-called \textit{ruler function} (OEIS A001511) \cite{allouche_shallit_2003}, where the $n$th term in this integer sequence is the highest power of 2 that divides $2 n$. It is clear from equation \eqref{eq:degree} that not every possible integer value of $k$ occurs in every generation. In particular, the number of distinct values for the degree grows as $2n+1$ for $n>1$. In a generation $n>1$, $k$ can only take values in the set $K^{(n)}$, where $K^{(n)} = \left\{1,2,\Lambda^{(n)}_{\text{odd}},\Lambda^{(n)}_{\text{even}}\right\}$, and where the sets of odd and even degrees are defined, respectively, as $\Lambda^{(n)}_{\text{odd}} = \left\{1+2^{\frac{j-1}{2}}\right\}_{j\, \text{odd}}$ and $\Lambda^{(n)}_{\text{even}} = \left\{2+2^{\frac{j}{2}-1}\right\}_{j\, \text{even}}$, with $j = 3, 4, ..., 2n-1$. For the first generation, the set of possible degrees is $ K^{(1)} = \left\{1,2\right\}$. 

After some algebra one can deduce the $n$th generation degree distribution $P^{(n)}(k)$ from the node degree as described by equation \eqref{eq:degree}, i.e.,
\begin{widetext}
\begin{align}
\label{eq:degree_distribution}
P^{(n)}(1)&=\frac{2}{3^n}\,, && P^{(n)}(2)=\frac{1}{3}\,, && P^{(n)}(3)=\left(\frac{2}{3}\right)^{n-1} - \frac{2}{3^n}\,,\\ P^{(n)}(4)&=\frac{5}{9} - \left(\frac{2}{3}\right)^{n-1} + \frac{1}{3^n}\,, && P^{(n)}\left(\Lambda^{(n)}_{\text{odd}}\right)=2^{\frac{1-j}{2}}\left(\frac{2}{3}\right)^n\,, &&
P^{(n)}\left(\Lambda^{(n)}_{\text{even}}\right) = 2\left(3^{-\frac{j}{2}} - 2^{-\frac{j}{2}}\left(\frac{2}{3}\right)^n\right)\,.\nonumber
\end{align}
\end{widetext}

In Fig. \ref{fig:degree_distribution}, the numerically determined degree distribution is shown for $n=10$ (red crosses), together with the theoretically predicted even (odd) degree distributions $P_e$ ($P_o$), for values of $n = 10$ (black dots) and $n=20$ (blue squares). For visualisation purposes, lines are shown for the even (full) and odd (dashed) distributions. 

Note that the degree distribution behaves as a power law, i.e., $P_{e} \sim k^{-\gamma_e}$, and $P_{o} \sim k^{-\gamma_o}$, with exponents $\gamma_e$ and $\gamma_o$. The exact values of $\gamma_e$ and $\gamma_o$ can be calculated in the large $n$ limit to be $\gamma_e = \log{3}/\log{2}\approx1.585$ and $\gamma_o = 1$. Both values were confirmed numerically, as indicated in the inset in Fig. \ref{fig:degree_distribution}. The value of $\gamma_e = \log{(3)}/\log{(2)}$ has been found previously for the deterministic network introduced by Barab\'asi, Ravasz, and Vicsek \cite{BARABASI2001559,Kazumoto2005}. Eliminating $j$ from $k = 1+2^\frac{j-1}{2}$ and $P^{(n)}\left(k\in\Lambda^{(n)}_{\text{odd}}\right)$, one finds that for large $k$, the odd degree distribution is $P^{(n)}\left(k\in\Lambda^{(n)}_{\text{odd}}\right) \sim (2/3)^n\, k^{-1}$. It follows that this vanishes for large $n$. A similar calculation reveals that no such behaviour is present for the even degrees, which will consequently dominate $P(k)$ for large $n$.

\begin{figure}[htp]
    \centering
    \includegraphics[width=0.95\linewidth]{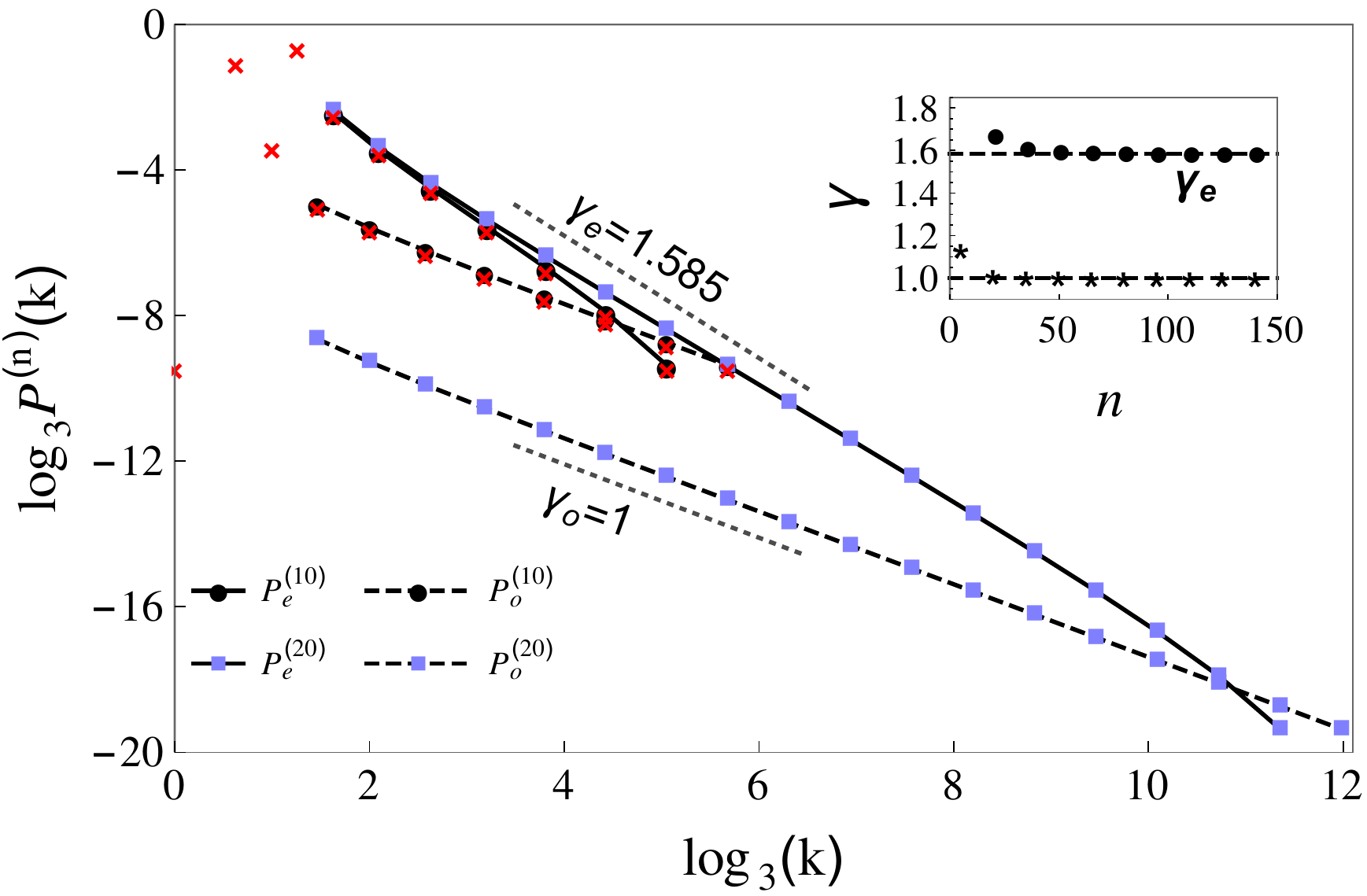}
    \caption{The numerical degree distribution (red crosses) for the $n=10$ HVG. $P_e(k)$ and $P_o(k)$ \eqref{eq:degree_distribution} are shown (respectively black full and dashed lines) for $n=10$ (black dots) and $n=20$ (blue squares). \textbf{Inset:} exponents $\gamma_e$ (dots) and $\gamma_o$ (stars) as a function of $n$.}
    \label{fig:degree_distribution}
\end{figure}

Equipped with the exact expression for the degree distribution \eqref{eq:degree_distribution}, one can calculate a number of properties. The moments of the distribution can be found as 
\begin{equation}
    \label{eq:moments}
    \langle k^m(n)\rangle = \sum_k k^m P^{(n)}(k)\,.
\end{equation}
Note that this expression only converges for $m<\gamma-1$, so only the mean degree is finite for the HVG associated with the HDDM. The mean degree $\langle k\rangle$ (where we suppressed $n$) can be determined exactly from equation \eqref{eq:moments} for $m=1$ as $\langle k\rangle = 4 \left(1-\left(2/3\right)^n\right)$. This expression converges to $\langle k^{\infty}\rangle =  4$, indicating that the network becomes sparse for large $n$, a property that could also be determined from the increasing sparsity of the adjacency matrix of the HVG. The mean degree can be used as a measure of the irregularity of the of the underlying data, since for an infinite periodic series
of period T, it is given by
\begin{equation}
    \label{eq:mean_degree_infinite}
    \langle k\rangle = 4 \left(1-\frac{1}{2T}\right)\,,
\end{equation}
which becomes $\langle k\rangle = 2$ for a constant series and $\langle k\rangle =4$ for a completely aperiodic series where $T\rightarrow\infty$ \cite{Nunez12}. The surface height profile hence becomes completely irregular in the long time limit; any periodicity in the model is washed out.  

Through a similar calculation we find that the second moment $\langle k^2\rangle$ diverges for large generations $n$ as $\langle k^2\rangle \sim 2(4/3)^n$, as expected for scale-free networks. Since scale-free networks arise naturally from fractal series in the context of VAs, one needs another measure to determine whether the underlying series is, e.g. a self-affine or self-similar fractal \cite{Lacasa4972}. One such measure is the network diameter $D$, which we now proceed to calculate exactly.

\subsection{Network diameter and clustering coefficient}
The diameter $D$ of a graph is defined as $D = \max_{ij}{l_{ij}}$, where $l_{ij}$ is the shortest path distance between nodes $i$ and $j$. For the HVG, the diameter $D(n)$ in generation $n$ is 
\begin{equation}
    \label{eq:diameter}
    D(n) = 2 (2^n -1)\,.
\end{equation}
This can easily be shown by considering that in the fractal landscape the outermost nodes $i=1$ and $j=3^n$ are the two most distant nodes in the corresponding HVG. The total path distance between these nodes can hence be split into two identical distances by virtue of the reflection symmetry about the central node. Therefore, we only consider the left side of the landscape for the calculation of the diameter. Assume now that we know the shortest path distance between the outermost left node and the central node in generation $n-1$. Then, again by virtue of the symmetry, we only need to consider the left side of the different copies of the $n-1$ landscape, because the leftmost node of such a copy is already visible from the central node of the copy to the left. 

This is shown graphically with blue lines in Fig. \ref{fig:diameter_HVG}. The resulting diameter of the left side of the landscape is twice the left diameter from the previous generation plus one from the connection between the two copies. From this construction, it is clear that $D(1) = 2$, $D(2) = 6$, $D(3) = 14$, ... Hence, the recurrence for the total diameter is $D(n+1) = 2D(n) +2$. Solving for $D(n)$ results in equation \eqref{eq:diameter}. It is clear that for large $n$, the diameter increases as a power law of the number of nodes $N(n) = 3^n$, i.e., $D(n)\sim N(n)^\epsilon$, making the HVG a self-similar, scale invariant fractal network \cite{Lacasa4972}. The exponent $\epsilon$ can easily be found to be the inverse of the degree exponent for the even degree nodes $\epsilon = \gamma_e^{-1}$. Such a power-law growth of the diameter is a signature of hub repulsion in fractal networks \cite{Lacasa4972,Song2006}, where the hubs of the network are the nodes with the highest connectivity. In our deterministic model, these hubs are located in the locally monotonic regions of the surface and are effectively repulsed from one another by the intermittent nodes corresponding to the highest data. Conversely, for stochastic self-affine series such as Brownian motion $B(t)$, where $B(t) = \lambda^{1/2} B(t/\lambda)$, the associated VG evidences the small-world property. Hence, the network diameter or associated quantities such as the average path length can be used to differentiate between different kinds of fractality in the underlying height profile.

By inspecting the clustering coefficient $C(k)$ for the nodes with degree $k$, it becomes clear that there is a one-to-one correspondence between the clustering coefficient of a vertex and its degree: $C(k)$ is proportional to $k^{-1}$ and stays stationary, i.e., nodes with degree $k$ always have the same clustering coefficient $C(k)$, independent of the generation. A simple geometrical argument reveals that the clustering coefficient is the ratio of the number of triangles that a node $i$ with degree $k$ is part of, normalised by the maximal number of triangles, i.e., ${k\choose2}$. So for e.g. $k=2$, the number of triangles is one (except for the central node of the network, which is not a part of any triangle). Hence, $C(k =2) = 1$. For $k=3$, one of the connections will be one of the neighbour data, while the other two form one triangle. Hence, $C(k=3) = 1/3$. In Table \ref{tbl:clustering}, the clustering coefficients are listed as a function of the node degree $k$ and generation $n$.
\begin{table}[h]
\centering
\begin{tabular}{|c|c|c|c|c|c|}
\hline
$k$ & $C^{(1)}(k)$ & $C^{(2)}(k)$ & $C^{(3)}(k)$& $C^{(4)}(k)$ & $C^{(5)}(k)$\\
\hline
2 & 0 & 1 & 1 & 1 & 1\\
3 & - & $1/3$ & $1/3$ & $1/3$ & $1/3$\\
4 & - & - & $1/2$ & $1/2$ & $1/2$\\
5 & - & - & $3/10$ & $3/10$ & $3/10$\\
6 & - & - & - & $1/3$ & $1/3$\\
9 & - & - & - & $7/36$ & $7/36$\\
10 & - & - & - & - & $1/5$\\
17 & - & - & - & - & $15/136$\\
\hline
\end{tabular}
\caption{Clustering coefficient $C^{(n)}(k)$ for the first five generations of the visibility graph.}
\label{tbl:clustering}
\end{table}
We find that the nodes with even and odd degree have the following form, respectively
\begin{align}
    \label{eq:clustering_k}
        C\left(k\in\Lambda^{(n)}_{\text{even}}\right) &= \frac{2}{k}, && 
        C\left(k\in\Lambda^{(n)}_{\text{odd}}\right) = \frac{2 (k-2)}{k (k-1)},
\end{align}
which are independent of the generation $n$. By construction, we also have $C(k=1) =0$, as nodes with only a single connection (edge nodes) can never form triangles. Hence, for large $k$, $C(k) \sim 2k^{-1}$, as was previously found to indicate hierarchical clustering \cite{Ravasz2002,Dorogovtsev2002}. The scaling law indicates that a hierarchy of nodes with different degrees of modularity coexists in the network. Since the local clustering coefficient is based on a three-point relationship, i.e., it is a function of three elements in the adjacency matrix, it expresses a local convexity property of the hierarchical surface \cite{Donner2012}. As a consequence, local minima of the surface tend to possess the highest clustering coefficient, as can for example be seen for internal nodes where $h=0$, which always have $k=2$ and thus $C = 1$, since the nearest neighbours of these nodes must be mutually visible for the node to be a local minimum. The clustering properties of the visibility graph can consequently be used as a measure for the local convexity of the underlying surface.

The mean clustering coefficient $\langle C^{(n)}\rangle$ can be found by 
\begin{equation}
    \label{eq:mean_clustering}
    \begin{split}
        \langle C^{(n)}\rangle &= \sum_k C(k) P^{(n)}(k)\,,
    \end{split}
\end{equation}
For large $n$, the mean clustering coefficient saturates to a constant value $\langle C^\infty\rangle$. It can easily be shown that the contributions from the odd degrees vanish for large $n$, as well as the majority of factors from the even degrees. The terms that remain are
\begin{equation}
    \label{eq:mean_clustering_limit}
    \begin{split}
        \langle C^\infty\rangle &= \lim\limits_{n\rightarrow\infty}\langle C^{(n)}\rangle = \frac{11}{18} + 8\sum\limits_{l=3}^\infty\frac{3^{-l}}{4+2^l}
    \end{split}
\end{equation}
where in the sum over the even degrees we have made the change of variables $l=j/2$. It can be checked numerically that $\langle C^n\rangle$ converges to the value $\langle C^\infty\rangle \approx 0.642$. This is shown in Fig. \ref{fig:clustering} together with the relations \eqref{eq:clustering_k}.

\begin{figure}[htp]
    \centering
    \includegraphics[width =0.95\linewidth]{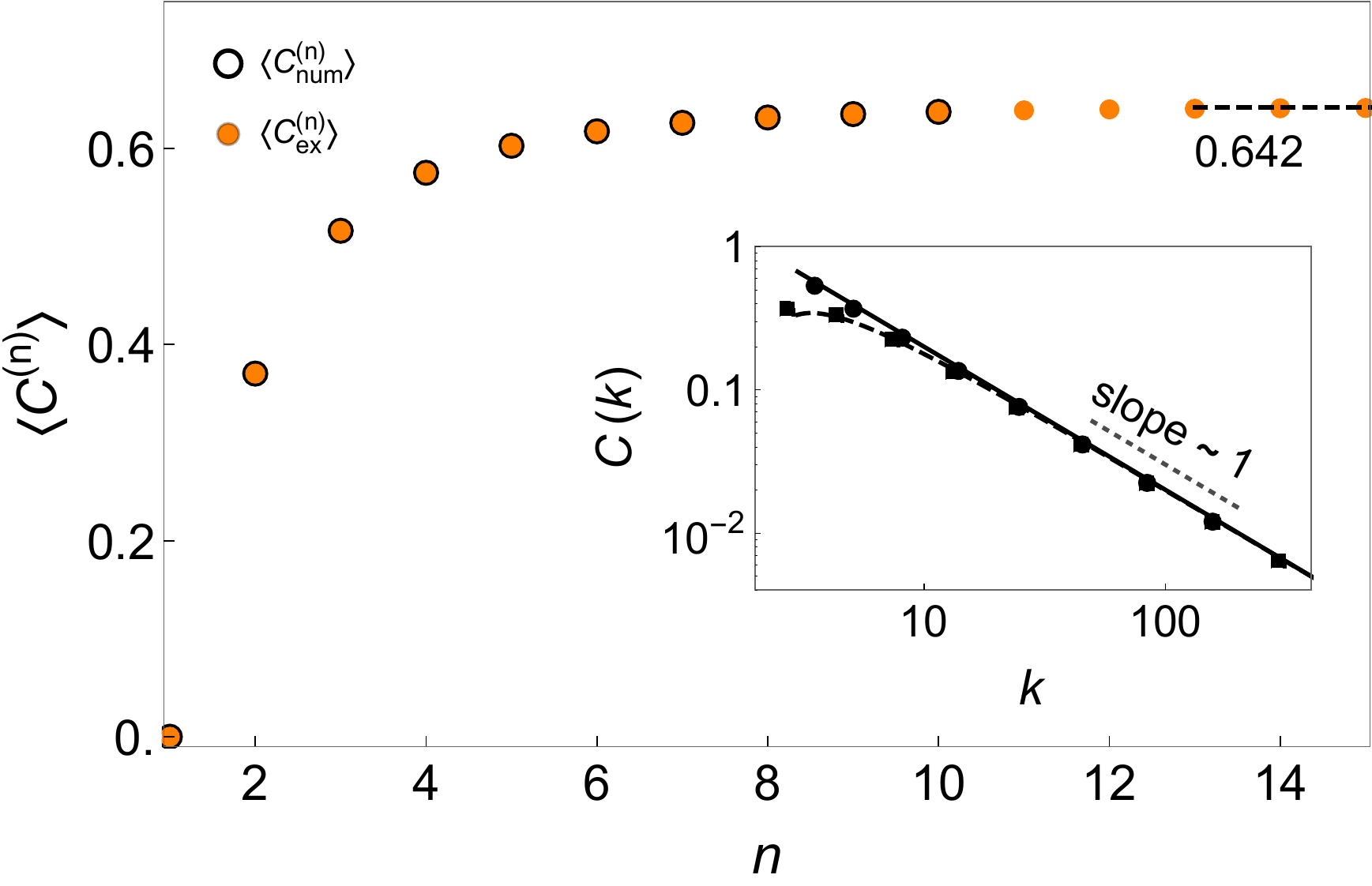}
    \caption{Numerical (black) and exact (orange) mean clustering coefficient $\langle C^{(n)}\rangle$ as a function of $n$. The black dashed line indicates $\langle C^\infty\rangle\approx0.64185$. \textbf{Inset:} $C(k)\sim2 k^{-1}$ for even (full line) and odd degrees (dashed line).}
    \label{fig:clustering}
\end{figure}

\subsection{Adjacency matrix and the eigenvalue spectrum}

The adjacency matrix $A^{(n)}$ is the matrix with values $a_{ij} = 1$ if the nodes $i$ and $j$ have mutual visibility, and value $a_{ij} = 0$ otherwise. The $A^{(n+1)}$ matrix is a block tridiagonal matrix with on the diagonal the adjacency matrix $A^{(n)}$ from the previous generation, repeated three times. It has the general form:
\begin{equation}
    \label{eq:adjacency_matrix_form}
    A^{(n+1)} = \begin{pmatrix}
        A^{(n)} & C^\intercal & 0\\
        C & A^{(n)} & B^\intercal\\
        0 & B & A^{(n)}
    \end{pmatrix}
\end{equation}
The matrix $B$ is related to $C$ by $B = J C^\intercal J$, where $J = antidiag(1,...,1)$ is the $3^n\times3^n$ exchange matrix.

For the adjacency matrix spectrum, we follow the representation used in \cite{Flanagan_2019} and rescale the eigenvalue index for every generation $n$ such that the smallest eigenvalue corresponds to 0 and the largest corresponds to 1. This way, we can show that the spectrum converges to a fixed shape with a hierarchical structure. In Fig. \ref{fig:evplot}(a), we show the spectrum in the rescaled coordinates for the first seven generations, where the point size is inversely related to the generation number. Note that the shape of the spectrum appears to be fractal, which we can surmise to be a consequence of Cauchy's interlacing theorem, as $A^{(n)}$ is always a proper subgraph of $A^{(n+1)}$. However, the exact shape of the curve eludes any analytic description so far and will be left for future work. The largest eigenvalue $\lambda_{max}$ of the adjacency matrix can be numerically determined to depend on the system size $N=3^n$ as $\lambda_{max} \sim N^\frac{1}{4} \sim 3^\frac{n}{4}$. In Fig. \ref{fig:evplot}(b), the eigenvalue distribution $\rho(\lambda)$ is drawn for $n=8$, showing that the distribution possesses power-law tails where $\rho(\lambda)\sim |\lambda|^{4.04}$. The scaling of the largest eigenvalue and the tail of the eigenvalue spectrum is very similar to the results previously found for the Barab\'asi-Albert scale-free network \cite{Goh2001}.

\begin{figure}[htp]
    \centering
    \includegraphics[width=\linewidth]{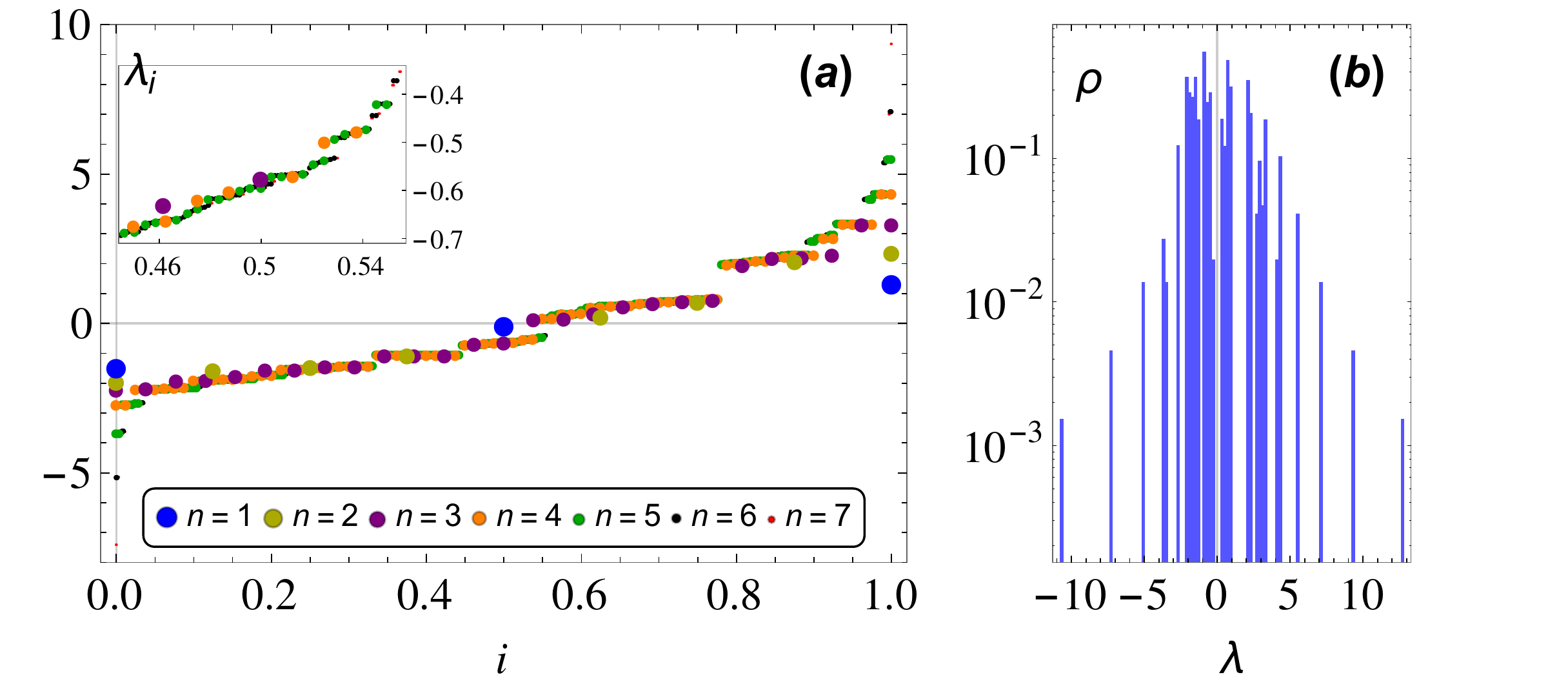}
    \caption{\textbf{(a)} Eigenvalue spectrum $\{\lambda_i\}$ as a function of the rescaled coordinate $i$ for generations $n = 1,...,7$. The spectrum converges to a fixed shape for large $n$. \textbf{Inset:} zoomed image around the central part of the spectrum. The fine structure is clearly visible. \textbf{(b)} Eigenvalue distribution $\rho(\lambda)$ for $n=8$ in semilog scale.}
    \label{fig:evplot}
\end{figure}

The largest eigenvalue of a visibility graph's adjacency matrix has been proposed as a measure to characterise the complexity of the associated sequence by means of the so-called Graph Index Complexity (GIC) \cite{KIM20082637}, which is a rescaled version of $\lambda_{max}$. It has been used for example as a measure to discriminate between randomness and chaos in the underlying data of the graph \cite{fioriti,Flanagan_2019}. In this work, we will not elaborate on this subject further but leave a detailed analysis of the spectral properties for future investigation.

We now move on from the HDDM and introduce an element of randomness into the process. We repeat our study of the graph-theoretical quantities.

\section{Hierarchical random deposition}\label{sec:HRD}
Let us generalise the model introduced in the previous section by introducing an element of randomness. In a generation $n$, a number $\lambda^n$ of disjoint sites are available. These sites are either located on the substrate, if no deposition event has occurred on this site prior to generation $n$, or on previously deposited particles. A square particle with linear size $s = \lambda^{-n}$ is deposited on every site with a probability $P\in[0,1]$. Additionally, we allow for the erosion of a block of similar size that is located on the interface with probability $Q\in[0,1]$, where the total probability of either deposition or erosion is bounded, i.e., $0<P+Q<1$. Erosion events thus remove material that has either been deposited in prior generations, or is part of the substrate. Lateral attachment of incoming particles, i.e., ballistic deposition, is not allowed and consequently overhangs cannot be formed. This process has been named the hierarchical random deposition model (HRDM). The parameters $P$ and $Q$ allow us to tune the amount of noise in the growth process. For $Q=0$ and $P\rightarrow0$ (i.e., the dilute limit), classical sequential deposition is recovered \cite{Krug1991}. 

The HVA can now be applied in exactly the same manner as was outlined in section \ref{sec:HDD}. In Fig. \ref{fig:HVG_random}, we illustrate the landscape formed by random deposition and the horizontal visibility lines between the data, as well as the associated complex network. Due to the randomness introduced in the HRDM, analytical progress is challenging so we will proceed numerically. However, we show analytically that the fractal nature of the set of admissible height values remains fractal (for large $n$) even when randomness is introduced into the model.

\subsection{The degree distribution}
Based on the seminal work on HVGs performed in Ref. \cite{Luque2009}, one would expect that the degree distribution becomes exponential due to the random nature of the deposition. There are, however, some caveats to such a claim. In the following, we show that the degree distribution is scale-free with exponent $\gamma =3$.
\begin{figure}
    \centering
    \includegraphics[width=\linewidth]{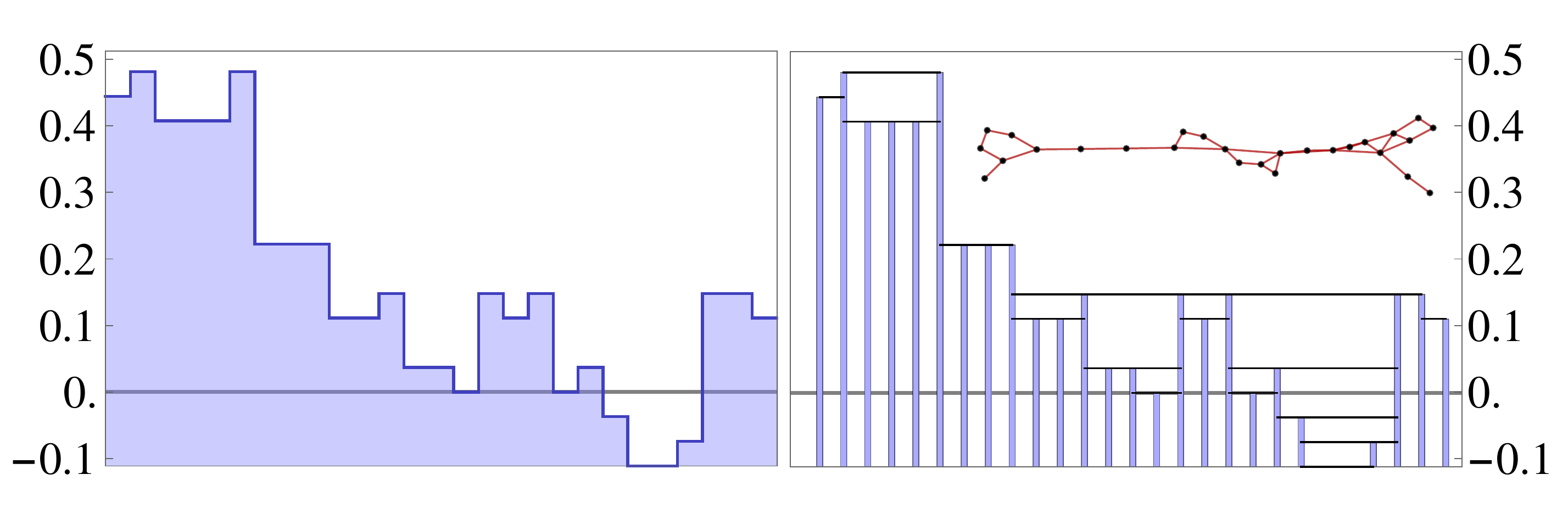}
    \caption{The landscape formed by hierarchical random deposition with $P = 0.5$, $Q =0.1$, $n=3$ and $\lambda=3$ (left) and the associated visibility lines (right). The inset in the right figure is the complex network formed by applying the horizontal visibility criterion \eqref{eq:visibility}.}
    \label{fig:HVG_random}
\end{figure}
In their initial work \cite{Luque2009}, the authors proved that the degree distribution associated with a bi-infinite (time) series generated from an uncorrelated random variable $X$ with an absolutely continuous probability density function $f(x)$ with $x\in[0,1]$ has the following exponential form:
\begin{equation}
    \label{eq:exponential}
    P(k) = \frac{1}{3} \left(\frac{2}{3}\right)^{k-2} = \frac{1}{3} \me^{-\mu_c (k-2)}\,,
\end{equation}
where $\mu_c = \ln{(3/2)}$. Note that for finite $n$, the set of admissible heights in the hierarchical deposition model is also finite, and the height distribution function is hence discrete. Therefore, one cannot \textit{a priori} expect an exponential degree distribution for $n<\infty$.

However, for uncorrelated random variables taking \textit{discrete} integer values up to a number $m$, an exact formula for $P(k)$ can be found, which does not possess a generic algebraic closed form \cite{Lacasa2016}. For $m\rightarrow\infty$, the degree distribution reduces to the exponential form that was found for the continuous case. The premise of the authors' results was based on the equispacedness of the support of the distribution, i.e., for the field $\mathbb{F} = (h_1, h_2,\dots,h_n) \subset \mathbb{N}$, it holds that $h_i = h_{i-1} + c$, $c\in\mathbb{N}^+$, for $m<\infty$, since the HVG is invariant under monotonic transformations in the series.
For asymptotically unbounded support, i.e., the continuous case, one can interpret an unequispaced subset of the natural numbers simply as the set of natural numbers (equispaced) where some elements have probability zero. Since in the continuous case the marginal distribution does not matter, one recovers the exponential distribution \eqref{eq:exponential} even when the support seems to be unequispaced. For finite $m$, however, it is not clear how the unequispacedness affects the shape of the degree distribution. 

In the HRDM, however, the height of a site can take discrete values that, when scaled with $\lambda^n$, are elements of a growing subset of $\mathbb{Z}$, where the increments between elements are unequispaced. In appendix \ref{app:B}, we calculate that in the long-time limit, $n\rightarrow\infty$, the cumulative height distribution $\mathcal{F}(h) = \lim_{n\rightarrow\infty}\mathcal{P}^{(n)}(H\leq h)$ is given by
\begin{widetext}
\begin{equation}
    \label{eq:cumulative_full2}
    \mathcal{F}(h) = \frac{1}{2\pi i}\int_\mathbb{R} \mathrm{d}t\,\frac{\me^{it/(\lambda-1)} - \me^{-ith}}{t}\prod_{j=1}^\infty\left[P \me^{it\lambda^{-j}} + Q
    \me^{-it\lambda^{-j}} + (1-P-Q)\right]\,.
\end{equation}
\end{widetext}
Note that although $\mathcal{F}$ is continuous, the height distribution function $\mathcal{P}(H = h) = \lim_{n\rightarrow\infty}\mathcal{P}^{(n)}(H = h)$ is a generalisation of the Cantor distribution and is therefore not absolutely continuous with respect to Lebesgue measure, making it a singular distribution. The set of final accessible heights is a fractal due to the retrodictive nature of the deposition process, i.e., every
final height can be reached by a unique path \cite{delatorre2000}.

These properties constitute the major differences between our results and the results in Ref. \cite{Lacasa2016}. Since the set of admissible (scaled) height values is a finite and non-uniformly spaced subset of $\mathbb{Z}$ for $n<\infty$, and their distribution depends on $P$, $Q$ and $\lambda$, the degree distribution of the HVG associated with the hierarchical random deposition model cannot generally be mapped one-to-one to the one found in Ref. \cite{Lacasa2016}. Asymptotically, the height distribution is singular, so the degree distribution cannot be expected to be exponential. Since we have established that the set of admissible height values is a fractal for $n\rightarrow\infty$ and $P+Q\neq0$, the degree distribution is expected to converge to a power law $P(k)\sim k^{-\gamma}$.

For $n\rightarrow\infty$, we can find the degree exponent of this power law distribution exactly. In Ref. \cite{Lacasa_2009}, the authors showed empirically that the relation between the exponent $\gamma$ of the visibility graph degree distribution and the Hurst exponent $H$ (alternatively, the local roughness exponent) of the associated data series for fractional Brownian motion (fBm) and fractional Gaussian noise (fGn) is given by respectively
\begin{align}
    \gamma &= 3 - 2H\,, && \text{for fBm}\\
    \gamma &= 5 - 2H\,, && \text{for fGn}\,.
\end{align}
For self-affine processes, the Hurst exponent is related to the fractal dimension $D_f$ through $D_f = 2-H$, leading to the relations between $\gamma$ and $D_f$, i.e.,
\begin{align}
    \gamma &= 2D_f -1\,, && \text{for fBm}\\
    \gamma &= 2D_f +1\,, && \text{for fGn}\,.
\end{align}
For the HRDM, the fractal dimension is equal to the Euclidean dimension \cite{INDEKEU1998294}, so the degree exponent can be either $\gamma = 1$ (fBm) or $\gamma = 3$ (fGn), since $D_f =1$. However, the mean degree for HVGs is in fact bounded from above by $\langle k\rangle \leq 4$ \cite{Luque2009,GUTIN20112421}. With $\gamma = 1$, the mean degree would become unbounded. Hence, we can conclude that in the long time limit, the degree exponent converges to $\gamma=3$. Alternatively, a simple detrended fluctuation analysis (DFA) reveals that indeed $H=1$ for both the deterministic and random HDM.

Note that since the degree exponent is directly related to the topological dimension for the HRDM, it is independent of $P$, $Q$ and $\lambda$. The degree exponent thus presents an alternative approach for the calculation of the local roughness exponent and for the characterisation of the underlying process as either fBm or fGn . Note that the concomitant global roughness exponent cannot be determined in this manner, since surfaces can exhibit anomalous scaling \cite{DasSarma1996,Lopez1997} where the global and local roughness exponents are unequal. Moreover, it should be noted that for the HRDM the roughness is a power law of only the length already in the first generation. Consequently, there is no initial growth that can be found as a power law of time \cite{BERX2021125998}. Hence, the usual scaling exponents $\beta$ or $z$ are nonexistent in this model. A detailed treatment of this problem can be found in Ref. \cite{PhdGiuraniuc}.

To numerically find the degree exponent $\gamma$ of the scale-free networks generated by the HVA, we solve the following transcendental equation for the maximum likelihood estimator (MLE) $\hat{\gamma}$ \cite{Clauset2009}:
\begin{equation}
    \label{eq:MLE}
    \frac{\zeta'(\hat{\gamma},k_\text{min})}{\zeta(\hat{\gamma},k_\text{min})} = -\frac{1}{m}\sum\limits_{i=1}^m \ln{k_i}\,,
\end{equation}
where $\zeta$ is the Hurwitz zeta function \cite{nist}, $m$ is the total number of observed values of the degrees $k$, and $k_\text{min}$ is the cutoff degree above which the distribution is a discrete power law. The prime denotes differentiation with respect to the first argument. To remove boundary effects for small system size, we impose periodic boundary conditions, but only on the two outer nodes, which now neighbour each other. This fixes the minimal possible degree to be $k=2$. We combine our analysis with a Kolmogorov-Smirnov test \cite{Clauset2009} on the cumulative degree distribution to find the optimal $\hat{\gamma}$ and $k_\text{min}$ that fit the simulation data. In appendix \ref{app:C}, we list the exponents for increasing $n$, together with the numerical errors.

In Fig. \ref{fig:distribution_evolution}(a), the degree distribution is shown for increasing generations, for $P=Q=0.25$ and $\lambda=3$, averaged over 5000 realisations, indicating that the network is scale-free. In the inset, the exponent $\gamma$ is shown as a function of the generation $n$. An exponential fit yields the expression $\gamma(n) = 3.00213 + 12.2109\, \me^{-0.34159 n}$, which supports the theoretical assertion that the exponent equals $\gamma =3$ for $n\rightarrow\infty$. The dependence of $\gamma$ on $P$ and $Q$ is shown in Fig. \ref{fig:HRDM_quantities}(a). While the degree exponent converges to $\gamma=3$, the speed of this convergence depends on the deposition and erosion probabilities.

\begin{figure}[htp]
    \centering
    \begin{subfigure}{0.485\linewidth}
    \includegraphics[width=\linewidth]{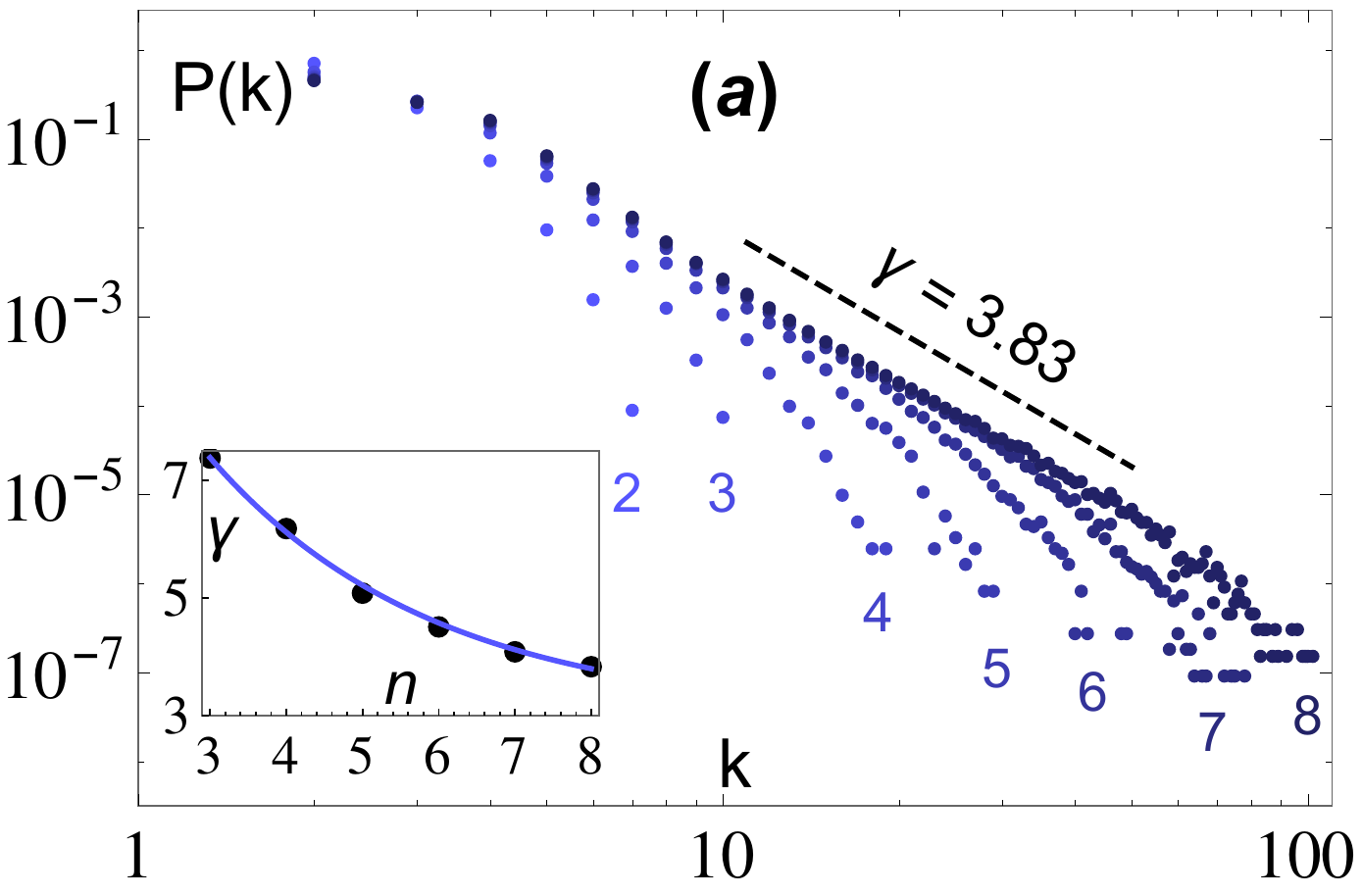}
    \end{subfigure}
    \begin{subfigure}{0.485\linewidth}
        \includegraphics[width=\linewidth]{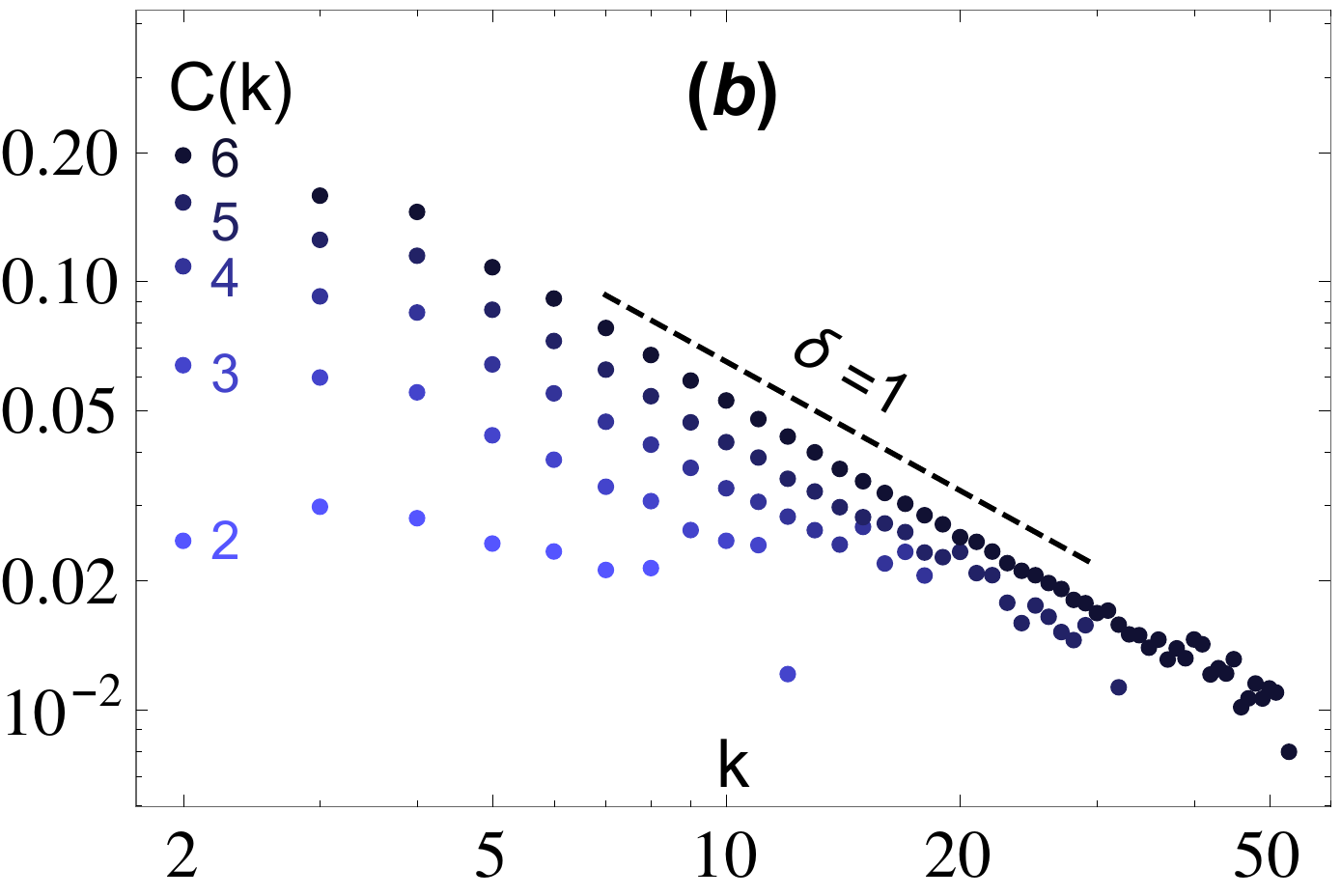}
    \end{subfigure}
    \caption{\textbf{(a)} The HVG degree distribution for increasing $n$, with $P = Q = 0.25$, and $\lambda=3$. \textbf{Inset:} Evolution of $\gamma(n)$. Symbols indicate numerical results, the full line is an exponential fit. \textbf{(b)} The degree dependence of the clustering coefficient $C(k) \sim k^{-\delta}$ for increasing $n$ with $P=Q=0.3$, shifted vertically for visibility purposes. Results are averaged over 5000 realisations.}
    \label{fig:distribution_evolution}
\end{figure}

The mean degree $\langle k\rangle$, degree variance $\sigma^2_k$, and mean clustering coefficient $\langle C\rangle$ are shown in Fig. \ref{fig:HRDM_quantities}(b)-(d). It can be seen that for $P=0$ and $P=1$, the mean degree is $\langle k\rangle = 2$ and the variance and mean clustering are $\sigma_k^2 = \langle C\rangle =0$, since the surface height profile is flat and the associated HVG becomes regular. For these points, the degree distribution is trivially $P(k) = \delta_{k,2}$ and hence the degree exponent $\gamma$ diverges to infinity. For a scale-free network with exponent $\gamma$, the mean degree and degree variance are
\begin{align}
    \langle k\rangle &= \frac{\zeta(\gamma-1)-1}{\zeta(\gamma)-1}\label{eq:mean_degree_zeta}\\
    \sigma_k^2 &= \frac{\zeta(\gamma-2) \left[\zeta(\gamma)-1\right]-\zeta(\gamma-1)\left[\zeta(\gamma-1)-2\right]-\zeta(\gamma)}{\left(\zeta(\gamma)-1\right)^2}\,,
\end{align}
respectively. For $\gamma=3$, this evaluates to $\langle k\rangle \approx 3.2$ and $\sigma_k^2\rightarrow\infty$. For the clustering coefficient $C$ no such general analytical expression is known, since it depends on the specific wiring on the level of vertex neighbourhoods, and hence on the mechanism that formed the network. When we again consider the relationship between the local clustering coefficients $C_i$, $i =1, 2, \dots,\lambda^n$ and associated vertex degrees, we find that for one degree $k$, multiple values of $C_i$ are possible. Hence, to study $C(k)$, we average the $C_i$ that correspond to one value of $k$. One can find that $C$ exhibits a power-law relation, i.e., $C(k)\sim k^{-\delta}$, with $\delta>0$. For increasing $n$, the exponent becomes $\delta = 1$. This is shown in Fig. \ref{fig:distribution_evolution}(b). The hierarchical clustering that was present for the HDDM persists even when randomness is introduced.

It can be shown numerically (see appendix \ref{app:C}), by means of a Floyd-Warshall algorithm \cite{cormen2001introduction}, that the network diameter is again a power law of the number of vertices, i.e., $D(N)\sim N^\epsilon$. This indicates that the underlying height profile is self-similar and not self-affine, the same result that was found in Section \ref{sec:HDD} for the HDDM. The scaling exponent $\epsilon$ now depends on the system parameters but always takes a value $0<\epsilon\leq 1$. The maximal value of $\epsilon = 1$ is reached for flat profiles where the diameter is maximal, since it is the distance between the edge nodes of the system, i.e., $D(N) = N-1$.

From Fig \ref{fig:HRDM_quantities}, we find that for a quantity $X(P,Q)$, with $X\in\left\{\gamma,\langle k\rangle, \sigma^2_k, \langle C\rangle\right\}$, the following invariance properties hold,
\begin{align}
    X(1-a,a) = X(1-a,0) = X(0,a) \label{eq:invariance1}
\end{align}
with $0\leq a\leq1$. Since this invariance holds for $\gamma$ (and trivially for $\langle k\rangle$ and $\sigma_k^2$), we can conclude that the degree distribution is identical for parameters choices that obey the invariance, even for finite generation $n$.
\begin{figure}[htp]
    \centering
    \includegraphics[width=0.9\linewidth]{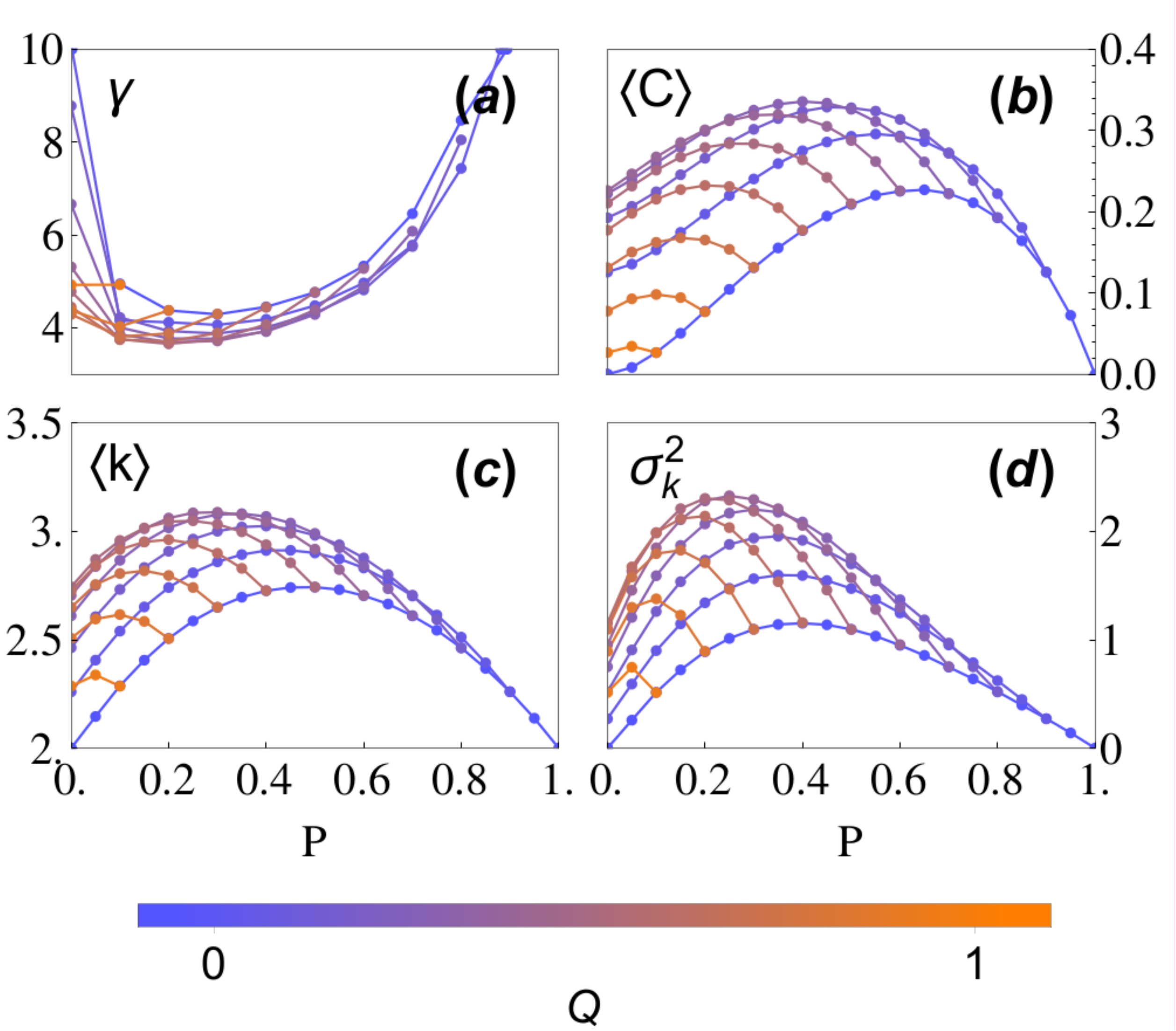}
    \caption{\textbf{(a)} The degree exponent $\gamma$ as a function of $P$ for increasing $Q$ for $n=8$. \textbf{(b)-(d)} Mean clustering coefficient $\langle C\rangle$, mean degree $\langle k\rangle$ and variance $\sigma^2_k$ as a function of $P$ for increasing $Q$, with $n=5$. Results are averaged over 5000 realisations.}
    \label{fig:HRDM_quantities}
\end{figure}

The different network quantities shown in Fig. \ref{fig:HRDM_quantities} can be used to study the structure of the underlying surface profile. Let us for example inspect the mean degree $\langle k\rangle$ as a function of $S=P+Q$, i.e., the total probability that a deposition or erosion event occurs at each site. Since $S$ is degenerate for different combinations of $P$ and $Q$, we average quantities with $S$ fixed. One can immediately see (appendix \ref{app:C}) that the highest value of $\langle k\rangle$ occurs for $S\approx0.6$. Similarly, the highest mean clustering coefficient, highest degree variance, and lowest degree exponent also occur for $S\approx0.6$. Intuitively, one can expect that the mean degree and clustering coefficient are maximal at the same value of $S$, since the higher average convexity of the surface, characterised by a high value of $\langle C\rangle$, implies that it is more irregular, which leads to a high value of $\langle k\rangle$. Moreover, since the mean degree of a power-law distribution only depends on the degree exponent $\gamma$, we can see that when the mean degree grows closer to the asymptotic value $\langle k\rangle \approx 3.2$ the degree exponent decreases to $\gamma\rightarrow3$, explaining the lower value of $\gamma$ for $S\approx 0.6$.

Before concluding, one remark is in order. While for $\lambda>2$ the deposition history is unique, this is not the case anymore for $\lambda=2$. Two different deposition histories can lead to the same landscape after two generations by means of partial levelling of vertical segments \cite{INDEKEU1998294}, where subsequent deposition and erosion of neighbouring blocks annihilates the vertical segment in between; the model thus becomes non-retrodictive. This degeneracy can possibly impact the properties of the associated HVG and its network topology. We defer to future research for a more detailed study.

\section{Conclusion and outlook}\label{sec:conclusions}
A mapping between the HDM for surface growth and complex networks was introduced, based on the HVA. We have shown that the resulting network is sparse and scale-free through the calculation of the degree distribution and exponents, and by calculating the first two moments of this distribution. We first performed these calculations for a deterministic version of the HDM and subsequently extended them to random deposition with the addition of a erosion probability. For the deterministic version, we then calculated the graph diameter and derived an exact expression for the clustering coefficient, which indicate that the network is fractal, self-similar, and exhibits hierarchical clustering. The spectrum of the adjacency matrix was determined numerically and it was shown that it converges to a yet undetermined fractal shape and its spectral density possesses power-law tails. For the random deposition model, we studied the resulting HVG numerically and showed analytically that $\gamma\rightarrow3$ asymptotically, indicating that the HRDM belongs to the fractional Gaussian noise stochastic processes. We showed that the hierarchical clustering and modularity found for the deterministic model is robust when introducing randomness into the system, and that it remains fractal through the scaling of the network diameter.

The connection between surface growth phenomena and complex networks is still largely unexplored. The results from this work indicate that the structure of the growing surface is encoded in the topology of the associated visibility network, and that it can be discovered through careful analysis of different network measures. Subsequent research could extend the above results to more complex situations, such as the HRDM with lateral attachment, resembling a hierarchical variation of the discrete model underlying the KPZ equation (i.e., ballistic deposition). Such extensions present an interesting problem. Since a point on the surface can possibly ``see'' both another point and an accompanying overhang, the associated HVG can be a multigraph.

More general, one can construct a classification of ``network universality classes'' for surface growth phenomena based on topological properties and exponents determined from visibility graphs. This can complement or refine the existing theory of universality classes for which the exponents are well-known. In particular, future research could extend the VA formalism to growth processes (or stochastic processes in general) that exhibit anomalous scaling \cite{mandelbrot2002gaussian,Chen2017}.

Another subject of future study can include the search for the other scaling exponents usually associated with dynamical scaling, i.e., the growth exponent $\beta$ and dynamic exponent $z$, and the corresponding relations between them. Since both exponents are concerned with the dynamical properties of the surface, it may be possible to find explicit time-dependent network quantities in the HVG and relate specific network growth exponents to $\beta$ and $z$. Mapping the dynamical scaling relations back to these network exponents can reveal hitherto unknown associations between them.

While the visibility algorithm approach to interface growth presents an interesting avenue of research, classical methods to calculate, e.g., growth exponents are still relatively easy to implement and do not usually necessitate the use of other methods. Hence, more future research is needed to identify the areas where VAs can really make a difference, either by facilitating calculations or by decreasing computational load.

\begin{acknowledgments}
The author is grateful for the spirited discussions with J. O. Indekeu and R. Tielemans, whose insights were very enlightening, and for the suggestions received from L. Lacasa. He also acknowledges the support of the G-Research grant.
\end{acknowledgments}

\appendix

\section{Derivation of the individual node degree for the HDDM}\label{app:A}
Consider the adjacency matrix of the visibility graph in generation $n$, i.e., $A^{(n)}$ with dimensions $3^n\times3^n$, see e.g., Fig. \ref{fig:adjacency_row}. There are $3^n$ nodes with degree $k_j$, where $j=1,2,...,3^n$ denotes the node index. Let us for convenience shift this index $j$ by one, $j' = j-1$ and expand the result into base 3, writing the digits to a vector $\vec{\alpha}_j$. For example, for the eighteenth node, $j=18$, hence $j'=17$ in generation $n=4$, and the number $j'$ can be written as $17 = 2\cdot 3^0 + 2\cdot 3^1 + 1\cdot 3^2 + 0\cdot 3^3$, hence, $\vec{\alpha}_{18} = (\alpha_1, \alpha_2,\alpha_3,\alpha_4) = (2,2,1,0)$, where every $\alpha_i\in\{0,1,2\}$ for $i=1,2,...,n$. Thus, every degree $k_j$ acquires a unique vector $\vec{\alpha}_j$, which is associated with its position in the adjacency matrix. We now consider the lowest-order adjacency matrix $A^{(1)}$, which is given by
\begin{equation}
\label{eq:adjacency_1}
    A^{(1)} = \begin{pmatrix}
        0 & 1 & 0\\
        1 & 0 & 1\\
        0 & 1 & 0
    \end{pmatrix}\,.
\end{equation}
For every subsequent generation $n$, the adjacency matrix $A^{(1)}$ represents a ``unit cell'', which is repeated along the diagonal of every $A^{(n)}$. From a physical standpoint, this can be attributed to the mutual visibility of neighbors, where every node can always see its direct neighbors. 

\begin{figure}[htp]
    \centering
    \includegraphics[width=\linewidth]{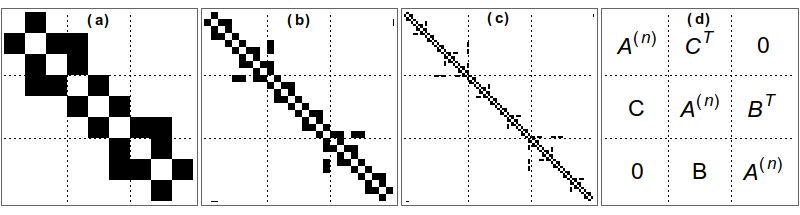}
    \caption{\textbf{(a)-(c)} Adjacency matrices for generations $n=2,3,4$, respectively. Black squares indicate matrix elements $a_{ij} = 1$, while white regions indicate that $a_{ij} = 0$. \textbf{(d)} A general depiction of the hierarchical structure of the adjacency matrix $A^{(n+1)}$.}
    \label{fig:adjacency_row}
\end{figure}

The first element in the vector $\vec{\alpha}$ indicates the column of the node index within this unit cell, i.e., $\alpha_1=0,1,2$ indicates the left, middle and right columns, respectively. The contribution of the position in this unit cell to the total node degree is the sum of elements in the column $\alpha_1$, and can be written consisely as $1+\delta_{\alpha_1,1}$, where $\delta_{ij}$ is the Kronecker delta, i.e., left and right columns contribute $1$ to the degree, while the middle column contributes $2$, which can easily be deduced from equation \eqref{eq:adjacency_1}.

We now look at higher ``levels'' $k$ in the adjacency matrix through the subsequent values in the $\vec{\alpha}$ vector; when $\alpha_k=0,1,2$, the index is located in the left, middle or right column of the $k$th level of the adjacency matrix. Let us look at the following example for $n=2$:
\begin{equation}
\begin{split}
    \label{eq:adjacency_2}
    A^{(2)} &= \left(\begin{array}{ccc}
        A^{(1)} & (C^{(1)})^t & 0\\
        C^{(1)} & A^{(1)} & J^{(1)} C^{(1)} J^{(1)}\\
        0 & J^{(1)} (C^{(1)})^t J^{(1)} & A^{(1)} 
    \end{array}\right)\\
    &= 
    \left(\begin{array}{ccccccccc}
0 & 1 & 0 & 0 & 0 & 0 & 0 & 0 & 0 \\
 1 & 0 & 1 & 1 & 0 & 0 & 0 & 0 & 0 \\
 0 & 1 & 0 & 1 & 0 & 0 & 0 & 0 & 0 \\
 0 & 1 & 1 & 0 & 1 & 0 & 0 & 0 & 0 \\
 0 & 0 & 0 & 1 & 0 & 1 & 0 & 0 & 0 \\
 0 & 0 & 0 & 0 & 1 & 0 & 1 & 1 & 0 \\
 0 & 0 & 0 & 0 & 0 & 1 & 0 & 1 & 0 \\
 0 & 0 & 0 & 0 & 0 & 1 & 1 & 0 & 1 \\
 0 & 0 & 0 & 0 & 0 & 0 & 0 & 1 & 0
\end{array}\right)    \,,
\end{split}
\end{equation}
where the first level $(k=1)$ is made up of the $3\times3$ matrices on the main diagonal, and the second level $(k=2)$ is the full matrix. This simple example of the second generation of the network already illustrates the nested structure of the adjacency matrix.

Consider now for example the column with index $j=8$, which corresponds to $j'=j-1=7$, and which has the following representation in our base 3 notation: $\vec{\alpha}_8 = (\alpha_1,\alpha_2) = (1,2)$. From this, one can read off that the index $j=8$ is located in the right column of the $k=2$ level, which is the uppermost level, and in the middle column of the $k=1$ level, or unit cell. From the latter, the node degree $k_8$ gains a contribution of $1+\delta_{\alpha_1,1} = 2$. The remaining factor of one needed to obtain the exact result $k_8 = 3$ will be discussed now.

For levels $k\geq2$, the block matrices located on the sub- and superdiagonals are nonzero and are given by transposing the matrix $C^{(n)}$ either along the main diagonal, or along the antidiagonal. For $n=2$, the block matrices are given by
\begin{equation}
\label{eq:C_matrices}
\begin{split}
    C^{(1)} &= \begin{pmatrix}
        0 & 1 & 1\\
        0 & 0 & 0\\
        0 & 0 & 0 \\
    \end{pmatrix}\\
    (C^{(1)})^t &= \begin{pmatrix}
        0 & 0 & 0 \\
        1 & 0 & 0\\
        1 & 0 & 0\\
    \end{pmatrix}\\
    J^{(1)}C^{(1)}J^{(1)} &= \begin{pmatrix}
        0 & 0 & 0 \\
        0 & 0 & 0 \\
        1 & 1 & 0 \\
    \end{pmatrix}\\
    J^{(1)}(C^{(1)})^t J^{(1)} &= \begin{pmatrix}
        0 & 0 & 1\\ 
        0 & 0 & 1\\
        0 & 0 & 0\\
    \end{pmatrix}\,,
\end{split}
\end{equation}
where $J^{(n)}$ is the $3^n\times3^n$ exchange matrix, i.e., for $n=1$, this is
\begin{equation}
    \label{eq:exchange_matrix}
    J^{(1)} = \begin{pmatrix}
        0 & 0 & 1\\
        0 & 1 & 0\\
        1 & 0 & 0\\
    \end{pmatrix}\,.
\end{equation}
The first row in the $C^{(n)}$ matrix will be denoted as the vector $\vec{v}^{(n)} = (\vec{0},\vec{v}^{(n-1)},\vec{v}^{(n-1)})$, where $\vec{0} = (0,...,0)$ is the zero vector with $3^{n-1}$ elements. We list here the first three $\vec{v}^{(n)}$:
\begin{equation}
    \label{eq:v}
    \begin{split}
        \vec{v}^{(0)} &= (1)\\
        \vec{v}^{(1)} &= (0,1,1)\\
        \vec{v}^{(2)} &= (0,0,0,0,1,1,0,1,1)\\
    \end{split}
\end{equation}
From the structure of \eqref{eq:v}, it can be seen that for $n\geq2$ the first $(3^n -1)/2$ elements in the vector $\vec{v}^{(n)}$ are identically zero. The first two nonzero elements are then located at positions $j = (3^n +1)/2$ and $j = (3^n +3)/2$. Subsequent nonzero elements are separated by gaps of unequal size. 

With moderate effort, one can deduce that subsequent nonzero elements are located at indices that are shifted by a factor of $(3i + \sum_{m=1}^{i} 3^{a(m)})/2$ with respect to the positions $j = (3^n +1)/2$ and $j = (3^n +3)/2$. Herein, $a(m)$ is the so-called ruler function (OEIS A001511), the elements of which are the exponent of the largest power of $2$ which divides a given number $2m$ \cite{allouche_shallit_2003}. This sequence can be characterized by the following recurrence relations:
\begin{align}
\label{eq:ruler_recurrence}
        a(2m+1) &= 1 &&  a(2m) = 1+ a(m)\,.
\end{align}
A plot of the magnitude of the elements of the ruler function $a(m)$ is shown for $m = 1, 2, \dots,255$ in Fig. \ref{fig:ruler}. One can now easily see why it is called as such, since the rising and falling of the sequence as a function of the index resembles the markings on a classical ruler.

\begin{figure}[htp]
    \centering
    \includegraphics[width=0.75\linewidth]{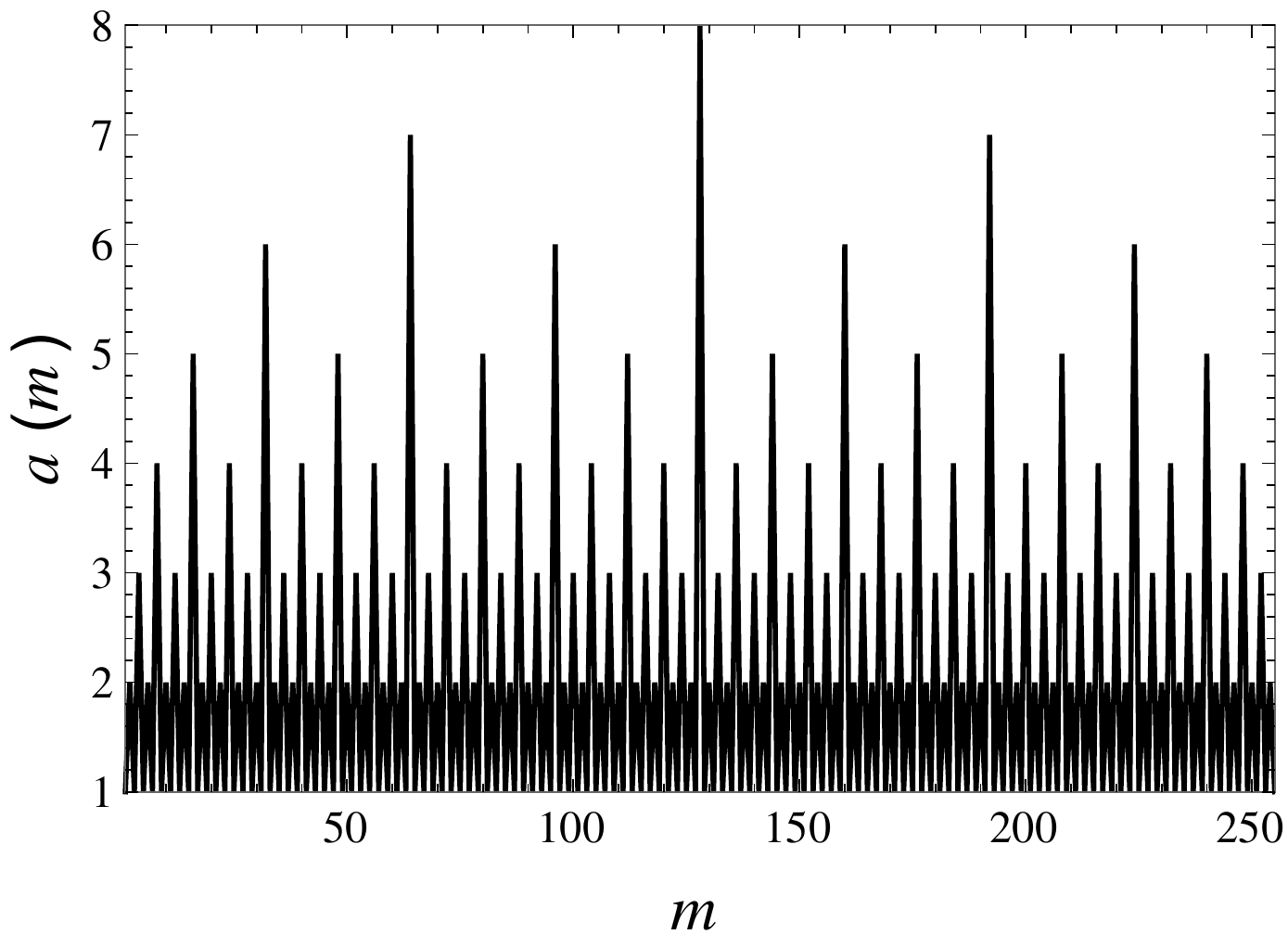}
    \caption{Graphical representation of the ruler sequence $a(m)$ generated by the recurrence relations \eqref{eq:ruler_recurrence} for $m = 1, 2, ... 255$. The pattern resembles the markings on a classical ruler.}
    \label{fig:ruler}
\end{figure}

After some simple algebra, the elements $v_j^{(n)}$, with $j=1,...,3^n$ can be fully determined by
\begin{equation}
    \label{eq:v_determined}
    v_j^{(n)} = \sum\limits_{i=0}^{2^{n-1}-1}\left(\delta_{j,\,\Xi_{ni}+\frac{1}{2}} + \delta_{j,\,\Xi_{ni}+\frac{3}{2}}\right)
\end{equation}
with
\begin{equation}
    \label{eq:Xi}
    \Xi_{ni} = \frac{1}{2}[3^n +3i +\sum\limits_{m=1}^i 3^{a(m)}]\,.
\end{equation}
The number of nonzero elements in $v^{(n)}$ can be deduced by noticing that in a generation $n+1$, this number has doubled with respect to generation $n$, and the first generation possesses two nonzero elements. Hence, the total number of nonzero elements is equal to $2^n$.

When the index $j$ is located in either the left or right column of the $k$th level (i.e., when $\alpha_k =0$ or $\alpha_k=2$, respectively), it picks up an extra factor of $+1$ for the node degree $k_j$ based on its position within the vectors $\vec{v}^{(k-1)}$ for $\alpha_k = 0$ or $\vec{v}^{(k-1)}\cdot J^{(k-1)}$ for $\alpha_k = 2$. Moreover, when $j$ is located in the middle column $(\alpha_k = 1)$ at positions $j = 3^{k-1}+1$ or $j =2\cdot3^{k-1}$, the node degree $k_j$ picks up an extra $2^{k-1}$.

Summing over levels $k$ and shifting the indices for simplicity, we arrive after some cumbersome algebra at the final result for the node degree $k_j(n)$, yielding the expression in equation \eqref{eq:degree}.

\section{Height distribution in the HRDM}\label{app:B}
We calculate the height probability density function (PDF) $\mathcal{P}^{(n)}(H = h)$ or, alternatively, the cumulative height distribution function (CDF) $\mathcal{F}^{n}(H\leq h)$ for the random hierarchical deposition model. Let us consider the height as a geometric random walk and let $\lambda^{-n}$ be the step size at time $n$, and $X_n$ a random variable that can assume the values $-\lambda^{-n}, 0$ or $\lambda^n$, according to whether the $n$th step is to the ``left'', ``middle'' or ``right'', which correspond to particle deposition, no deposition, or erosion, respectively. The random walk is then the countable sequence 
\begin{equation}
    \label{eq:random_walk_sequence}
    X_1,\,X_1 + X_2,\, X_1 + X_2 + X_3,\hdots
\end{equation}
and the height $H^{(n)}$ in generation $n\geq1$ is consequently
\begin{equation}
    \label{eq:height_definition}
    H^{(n)} = \sum_{j=1}^n X_j\,,
\end{equation}
which, for the geometric random walk with $\lambda\geq2$ equals
\begin{equation}
    H^{(n)} = \sum_{j=1}^n a_j \lambda^{-j}\,,\, a_j\in\{-1,0,1\}
\end{equation}
The distribution $\mathcal{P}$ of $H$ is a measure whose cumulative distribution function, defined as $\mathcal{F}(x) = \mathcal{P}([-1/2,x))$ is continuous and non-decreasing. Note that the measure is singular with respect to the Lebesgue measure, and hence we will from now on only work with the cumulative distribution function $\mathcal{F}$. We show that $\mathcal{F}$ can be found exactly for the hierarchical deposition model for $n\rightarrow\infty$, albeit in an integral form that must be solved numerically. 

\begin{theorem}
Consider the limiting ($n\rightarrow\infty$) hierarchical random deposition process where $H = \sum_{j=1}^\infty X_j$ and $X_j$ are independent random variables that have values $-\lambda^{-j}$, $0$ or $\lambda^j$ with probabilities $Q$, $1-P-Q$ and $P$, respectively. The cumulative distribution function $\mathcal{F}(h) = \mathcal{P}(H\leq h)$ is the following:
\begin{widetext}
\begin{equation}
    \label{eq:cumulative_full}
    \mathcal{F}(h) = \frac{1}{2\pi i}\int_\mathbb{R} \mathrm{d}t\,\frac{\me^{it/(\lambda-1)} - \me^{-ith}}{t}\prod_{j=1}^\infty\left[P \me^{it\lambda^{-j}} + Q
    \me^{-it\lambda^{-j}} + (1-P-Q)\right]
\end{equation}
\end{widetext}
\end{theorem}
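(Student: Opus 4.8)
The plan is to compute the characteristic function of $H$ and then recover the cumulative distribution function by an inversion formula. First I would observe that, since the $X_j$ are independent, the characteristic function of $H^{(n)} = \sum_{j=1}^n X_j$ factorizes as $\varphi_n(t) = \mathbb{E}[\me^{itH^{(n)}}] = \prod_{j=1}^n \mathbb{E}[\me^{itX_j}]$. Each factor is elementary: $X_j$ takes value $-\lambda^{-j}$ with probability $Q$, $0$ with probability $1-P-Q$, and $\lambda^{-j}$ with probability $P$ (note the corrected sign on the erosion step, $-\lambda^{-j}$ rather than $\lambda^j$, which is forced by the geometric random walk picture and by convergence of the product), so $\mathbb{E}[\me^{itX_j}] = P\me^{it\lambda^{-j}} + Q\me^{-it\lambda^{-j}} + (1-P-Q)$. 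Hence $\varphi_n(t) = \prod_{j=1}^n \left[P\me^{it\lambda^{-j}} + Q\me^{-it\lambda^{-j}} + (1-P-Q)\right]$.

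Next I would pass to the limit $n\to\infty$. Because $|1 - \mathbb{E}[\me^{itX_j}]| = \mathcal{O}(|t|\lambda^{-j})$ uniformly on compact $t$-sets, the infinite product converges absolutely and uniformly on compacta to an entire-type limit $\varphi(t) = \prod_{j=1}^\infty\left[P\me^{it\lambda^{-j}} + Q\me^{-it\lambda^{-j}} + (1-P-Q)\right]$; simultaneously $H^{(n)} \to H = \sum_{j=1}^\infty X_j$ almost surely and in distribution (the series converges since $|X_j|\le\lambda^{-j}$ and $\sum\lambda^{-j} = 1/(\lambda-1) < \infty$), so by Lévy's continuity theorem $\varphi$ is exactly the characteristic function of $H$. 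In particular $H$ is supported on $[-1/(\lambda-1), 1/(\lambda-1)]$, and the lower endpoint $-1/(\lambda-1)$ is the natural base point for the CDF — this is where the $\me^{it/(\lambda-1)}$ factor in the claimed formula originates.

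Then I would invoke the Gil-Pelaez / Lévy inversion formula for the cumulative distribution function in the form $\mathcal{F}(h) = \mathcal{F}(a) + \frac{1}{2\pi i}\int_\mathbb{R}\frac{\me^{-ita} - \me^{-ith}}{t}\varphi(t)\,\mathrm{d}t$, valid when $a$ and $h$ are continuity points of $\mathcal{F}$; taking $a = -1/(\lambda-1)$ (so $\mathcal{F}(a) = 0$) yields precisely equation \eqref{eq:cumulative_full}. To make this rigorous I must check that the integrand is integrable near $t=0$ — this is where $(\me^{-ita}-\me^{-ith})/t = \mathcal{O}(1)$ as $t\to 0$ is used — and controlled as $|t|\to\infty$. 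The tail decay is the one genuinely delicate point: unlike the absolutely continuous case, $\varphi(t)$ need not be integrable at infinity (the distribution is singular, a generalized Cantor/Bernoulli-convolution-type measure), so I would instead argue that the inversion integral converges as an improper (principal-value) Riemann integral, using the standard fact that $\int_{-R}^{R}\frac{\me^{-ita}-\me^{-ith}}{t}\varphi(t)\,\mathrm{d}t$ has a limit as $R\to\infty$ for any characteristic function $\varphi$ and continuity points $a,h$ of $\mathcal F$ — and $\mathcal F$ is continuous everywhere here, as already noted in the text (the measure has no atoms because each $X_j$ contributes a nontrivial independent perturbation at scale $\lambda^{-j}$).

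The main obstacle I anticipate is precisely this justification of the inversion step in the absence of $L^1$-integrability of $\varphi$: one must be careful to state the integral in \eqref{eq:cumulative_full} as the symmetric (principal value) limit and to invoke the version of the Lévy inversion theorem that does not require integrability of the characteristic function, rather than the naïve Fourier-inversion argument that would apply only if the density existed. Everything else — the factorization, the product convergence, the identification of the support endpoint — is routine.
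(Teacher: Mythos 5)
Your proposal is correct and follows essentially the same route as the paper: factorize the characteristic function $\varphi_H(t)$ over the independent $X_j$ and then recover $\mathcal{F}(h)$ by Fourier inversion anchored at the support endpoint $-1/(\lambda-1)$ (the paper phrases the inversion as integrating the Fourier representation of the indicator of $(-\tfrac{1}{\lambda-1},h]$ against $\mathrm{d}\mathcal{F}$, which is the same Gil-Pelaez/L\'evy step you invoke by name). Your added care about the principal-value interpretation of the non-$L^1$ integrand, and your correction of the sign/exponent typo $\lambda^{j}\to\lambda^{-j}$ in the statement, go slightly beyond what the paper's own proof makes explicit.
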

\begin{proof}
The characteristic function $\varphi_H(t)$ for the height for $n\rightarrow\infty$ can be written as
\begin{equation}
    \label{eq:char_function}
    \begin{split}
        \varphi_H(t) &= E\left[\me^{itH}\right] = E\left[\me^{it\sum_j X_j}\right]\\
        &= \prod_{j=1}^\infty\left[P \me^{it\lambda^{-j}} + Q
    \me^{-it\lambda^{-j}} + (1-P-Q)\right]\,,
    \end{split}
\end{equation}
where in the final equality we have used the fact that the $X_j$ are independent random variables and the expected value can be factored. Furthermore, we can define the characteristic function as $\varphi_H(t) = \int_\mathbb{R} \me^{itx}\mathrm{d}\mathcal{F}(x)$. To proceed, we define the indicator function $\chi_h(x)$ as
\begin{equation}
    \label{eq:indicator}
    \chi_h(x) = \begin{cases}
        1\,,&\qquad if\, x\in(-\frac{1}{\lambda-1},h]\\
        0\,, &\qquad else\,,
    \end{cases}
\end{equation}
which can also be defined through its Fourier transform $\hat{\chi}_h(t)$, i.e.,
\begin{equation}
    \label{eq:indicator_fourier}
    \chi_h(x) = \int_\mathbb{R} \frac{\me^{i t x}}{2\pi i t}\left(\me^{it/(\lambda-1)} - \me^{-i t h}\right)\,\mathrm{d}t\,.
\end{equation}
The cumulative distribution function can now be written as
\begin{equation}
\begin{split}
    \mathcal{F}(h) &=\int_\mathbb{R} \chi_h(x)\mathrm{d}\mathcal{F}(x)\\
    &= \frac{1}{2\pi i}\int_\mathbb{R}\frac{\me^{it/(\lambda-1)} - \me^{-i t h}}{t}\varphi_H(t)\mathrm{d}t\,,
\end{split}
\end{equation}
which, upon inserting the expression for $\varphi_H(t)$ proves the theorem.
\end{proof}
The cumulative distribution function as defined by equation \eqref{eq:cumulative_full} cannot be calculated analytically as far as we know, but numerical approximations by truncating the infinite product can yield high-accuracy results. In Fig. \ref{fig:cantor}, we show the PDF and the CDF for the ternary Cantor set and a set with $P=1/3$ and $Q=1/4$, and compare results obtained by exact counting of all possible heights with the infinite-generation approximation in equation \eqref{eq:cumulative_full}. The results coincide quite well. 
\begin{figure}[htp]
    \centering
    \includegraphics[width=\linewidth]{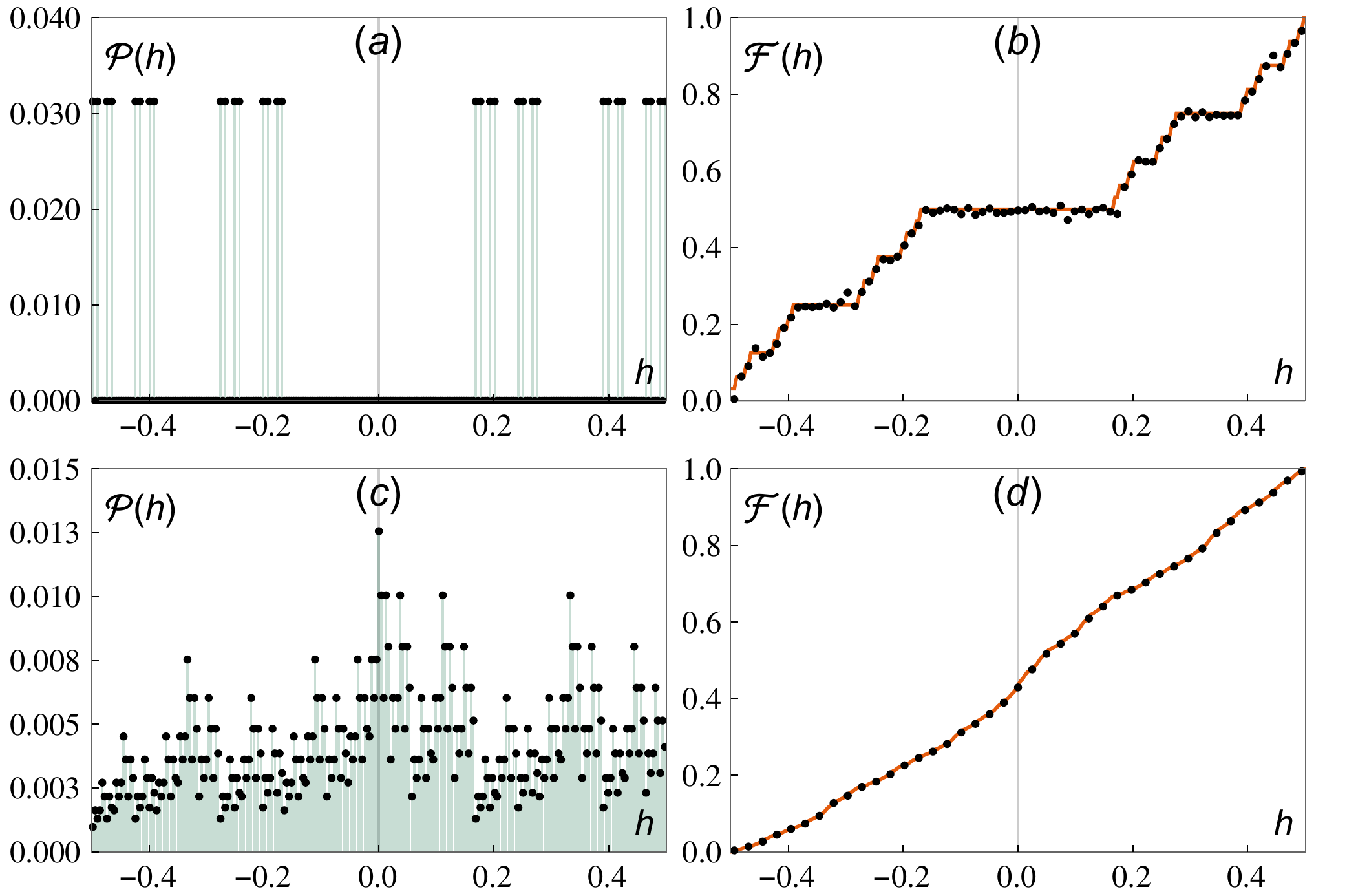}
    \caption{\textbf{(a)-(b)} Height PDF and CDF, respectively, for $P = Q = 1/2$, with the ternary Cantor set as support. \textbf{(c)-(d)} Height PDF and CDF, respectively, for $P = 1/3$ and $Q = 1/4$. Orange lines indicate exact distributions and black dots are numerical approximations obtained from Eq. \eqref{eq:cumulative_full}. Note that the same pattern repeats at multiple scales, indicating the fractal behaviour of the PDF. For both parameter choices, $n=5$ and $\lambda=3$ were used to find the exact PDF/CDF, and the product in equation \eqref{eq:cumulative_full} is truncated to $j = 100$.}
    \label{fig:cantor}
\end{figure}

Note that from the characteristic function all of the moments of $\mathcal{P}(h)$ can be calculated by differentiation, i.e., $\langle h^m\rangle = i^{-m} \varphi^{(m)}_H(0)$, where the superscript denotes $m$-fold differentiation. After some simple algebra, the mean and variance are then found to equal
\begin{align}
    \langle h\rangle &=\frac{P-Q}{\lambda-1}\\
    \sigma_h^2 &= \langle h^2\rangle - \langle h\rangle^2 = \frac{(P+Q) - (P-Q)^2}{\lambda^2 -1}\,.
\end{align}
For $\lambda=3$, $P=Q=1/2$, the mean and variance equal $\langle h\rangle = 0$ and $\sigma_h^2 = 1/8$, respectively. For $P=1/3$ and $Q=1/4$, they equal $\langle h\rangle = 1/24$ and $\sigma_h^2 = 83/1152 \approx 0.072$, respectively. This is confirmed by numerical calculation.

\section{Numerical degree exponents, moments, diameter and clustering for the HRDM}\label{app:C}
To numerically find the degree exponent $\gamma$ of the scale-free networks generated by the HVA, we solve the transcendental equation for the maximum likelihood estimator (MLE) $\hat{\gamma}$ \cite{Clauset2009}:
\begin{equation}
    \label{eq:MLE2}
    \frac{\zeta'(\hat{\gamma},k_{\text{min}})}{\zeta(\hat{\gamma},k_{\text{min}})} = -\frac{1}{m}\sum\limits_{i=1}^m \ln{k_i}\,,
\end{equation}
where $\zeta$ is the Hurwitz zeta function \cite{nist}, $m$ is the total number of observed values of the degrees $k$, and $k_{\text{min}}$ is the cutoff degree above which the distribution is a discrete power law. The prime denotes differentiation with respect to the first argument. The standard error $\sigma$ associated with the MLE is
\begin{equation}
    \label{eq:std_error}
    \sigma = \left[m \left(\frac{\zeta''(\hat{\gamma},k_{\text{min}})}{\zeta(\hat{\gamma},k_{\text{min}})}\right) -\left(\frac{\zeta'(\hat{\gamma},k_{\text{min}})}{\zeta(\hat{\gamma},k_{\text{min}})}\right)^2\right]^{-\frac{1}{2}}\,.
\end{equation}
Alternatively, the exponent can be found approximately, whereby the true power-law distributed integers are approximated as continuous reals rounded to the nearest integer, i.e.,
\begin{equation}
    \label{eq:exponent_approximate}
    \begin{split}
        \hat{\gamma}' &\simeq 1 + m \left[\sum\limits_{i=1}^m \ln\frac{k_i}{k_{\text{min}}-\frac{1}{2}}\right]^{-1}\,.
    \end{split}
\end{equation}
The error can be found in the same manner as before. We complement our analysis with a Kolmogorov-Smirnov (KS) test \cite{Clauset2009} on the cumulative degree distribution to find the optimal $\hat{\gamma}$ (or $\hat{\gamma}'$) and $k_{\text{min}}$ that fit the simulation data. We show the MLE results for the estimators $\hat{\gamma}$ and $\hat{\gamma}'$ in Table \ref{tbl:exponent}.

\begin{table}[htp]
\centering
\caption{Evolution of $\hat{\gamma}$ and $\hat{\gamma}'$ exponents with associated standard errors $\sigma$ and $\sigma'$. Parameters are $P = Q =0.25$ and $\lambda = 3$ and results are averaged over 5000 realizations.}
\begin{tabular}{|c|c|c|c|c|c|c|} 
\hline
\textbf{n}         & \textbf{3}     & \textbf{4}     & \textbf{5}     & \textbf{6}     & \textbf{7}     & \textbf{8}      \\ 
\hline
$\hat{\gamma}$  & 6.476 & 5.809 & 4.888 & 4.437 & 4.045 & 3.810  \\ 
\hline
$\sigma$  & 0.066 & 0.040 & 0.016 & 0.013 & 0.008 & 0.012  \\ 
\hline
$\hat{\gamma}'$ & 7.379 & 6.179 & 5.081 & 4.507 & 4.083 & 3.830  \\ 
\hline
$\sigma'$ & 0.017 & 0.008 & 0.004 & 0.004 & 0.003 & 0.006  \\
\hline
\end{tabular}
\label{tbl:exponent}
\end{table}

As a function of the total event probability $S = P+Q$, the mean clustering coefficient $\langle C\rangle$, the mean degree $\langle k\rangle$, the degree variance $\sigma_k^2$ and the degree exponent $\gamma$ are shown in Fig. \ref{fig:Splot}. The lines connecting data points are only for visualisation purposes.

\begin{figure}[htp]
    \centering
    \includegraphics[width=\linewidth]{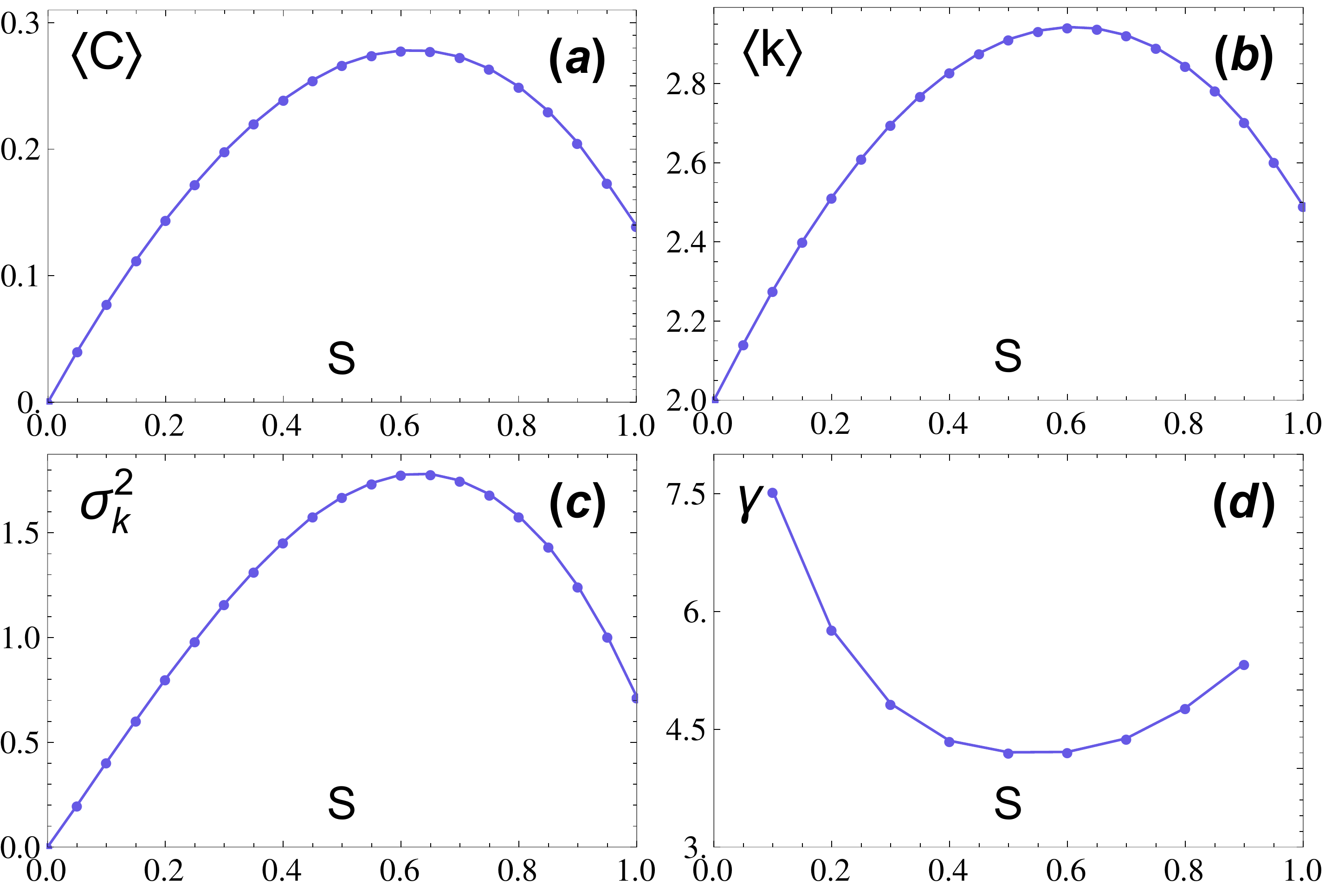}
    \caption{\textbf{(a) - (c)} Mean clustering coefficient, mean degree and degree variance as a function of $S=P+Q$ for $n=5$. \textbf{(d)} Degree exponent $\gamma$ as a function of $S=P+Q$ for $n=8$. Results are averaged over 5000 realisations.}
    \label{fig:Splot}
\end{figure}

The network diameter $D$ is shown in Fig. \ref{fig:diameter}(a) as a function of the number of vertices $N = 3^n$ for fixed deposition probability $P=0.2$. Full lines are power-law fits to the numerical data. The power-law behaviour of $D(N)\sim N^\epsilon$ is clear from Fig. \ref{fig:diameter}(b). Lines are only shown for visualisation purposes. 
\begin{figure}[htp]
    \centering
    \includegraphics[width=0.95\linewidth]{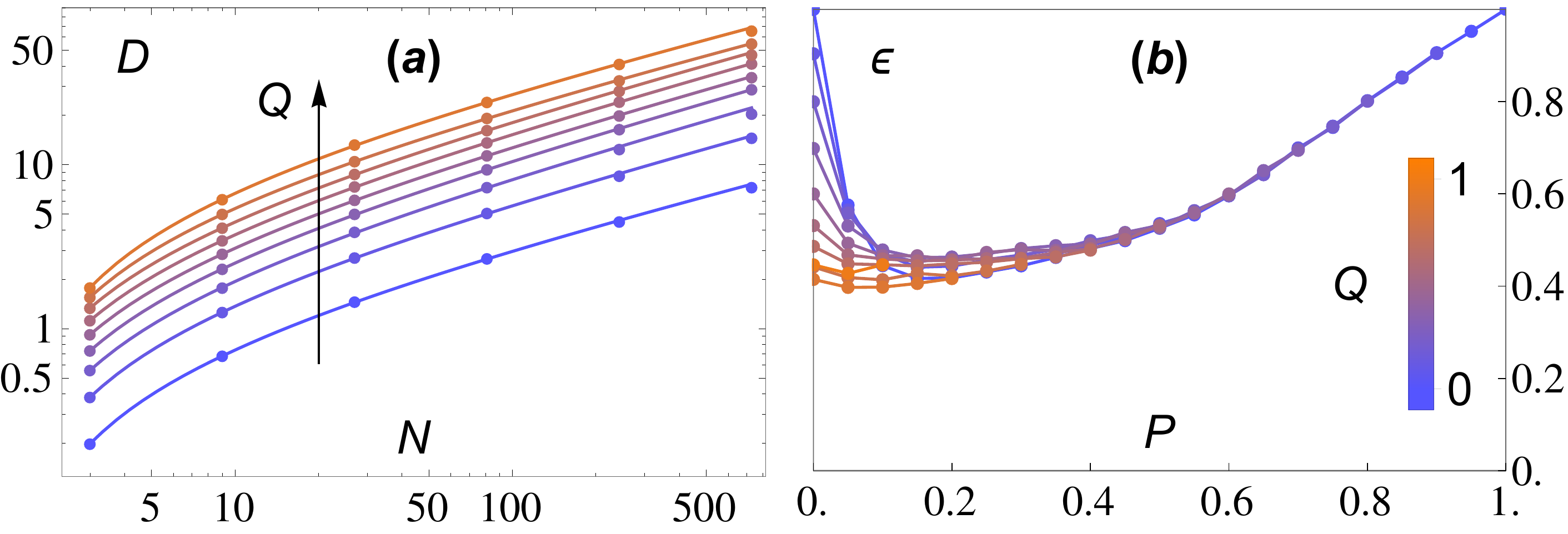}
    \caption{\textbf{(a)} The HVG diameter $D$ as a function of $N=3^n$ for fixed $P=0.2$ and $0\leq Q\leq P$ in steps of $\Delta Q =0.1$, and \textbf{(b)} the scaling exponent $\epsilon$ as a function $P$ for fixed $n=6$ and $0\leq Q\leq1$ n steps of $\Delta Q =0.1$. Results are averaged over 5000 realisations.}
    \label{fig:diameter}
\end{figure}

\bibliographystyle{apsrev4-2}
\bibliography{biblio.bib}

\end{document}